\documentclass[12pt]{article}
\usepackage[utf8]{inputenc}

\usepackage{booktabs}
\usepackage{threeparttable}
\usepackage{color}
\usepackage{float}
\usepackage{amsmath}
\usepackage{amssymb}
\usepackage{graphicx}
\usepackage{booktabs}
\usepackage{geometry}
\usepackage[authoryear]{natbib}
\usepackage[onehalfspacing]{setspace}

\newtheorem{definition}{Definition}\newtheorem{lemma}{Lemma}\newtheorem{proposition}{Proposition}\newenvironment{proof}[1][Proof]{\noindent\textbf{#1.} }{\ \rule{0.5em}{0.5em}}

\usepackage{xcolor}
\usepackage{subcaption}
\usepackage{enumitem}
\usepackage[colorlinks=true]{hyperref}
\hypersetup{
 linkcolor=blue,citecolor=black}
 
\title{From Summer to Spring: A Shift in US Housing Market Seasonality\thanks{Hu: \texttt{yh623@cam.ac.uk}, Selcuk: \texttt{selcukc@cardiff.ac.uk}}}
\author{
  Yihan Hu \\ \small University of Cambridge 
  \and
  Cemil Selcuk \\ \small Cardiff University
}
\date{}

\begin{document}
\maketitle
\begin{abstract}
The US housing market exhibits pronounced seasonal cycles: prices and sales rise
through spring, peak in summer, and decline through autumn and winter. Since 2021,
this pattern has shifted earlier in the calendar year, with spring strengthening at the
expense of the traditional summer peak. A leading explanation for housing market
seasonality is the search-and-matching model of Ngai and Tenreyro (2014), which links
these cycles to household mobility through a thick-market mechanism. In this framework,
periods with higher mobility generate thicker markets and higher prices and transaction
volumes. Viewed through this lens, a shift in the seasonal cycle of prices and sales raises the question of whether the timing of household moves has changed.
Did residential mobility shift earlier in the calendar year after 2021?

We find that it did. Using SIPP data, and corroborating evidence from Google Trends
indicators, we document a post-2021 shift in mobility toward spring. We extend the model
to a monthly frequency, prove the existence and uniqueness of the equilibrium, and calibrate
it to the observed mobility patterns. The calibrated model reproduces the spring shift in
both prices and transaction volumes, consistent with the view that a change in the timing
of household mobility alone can account for the recent shift in housing market seasonality.
\\
\noindent\textbf{Keywords:} Housing market, Seasonality, Mobility, Search and matching\\
\noindent\textbf{JEL codes:} D4, J6, R2
\end{abstract}
 
\section{Introduction}

The US housing market exhibits highly regular seasonal cycles. Prices and transaction volumes rise through spring, peak in summer, and fall through autumn and winter. This pattern has been documented repeatedly over several decades and across a wide range of 
markets \citep{Case_Shiller1989, Goodman1993}. The
regularity of this cycle has made it one of the most robust
stylised facts in housing economics. Summer is the ``hot'' season,
winter the ``cold'' one, and the amplitudes of the swings are economically meaningful in both prices and volumes.

There are signs, however, that this regularity may have changed. Since the
COVID-19 pandemic, seasonal peaks appear to arrive earlier in the
calendar year, with spring months strengthening at the expense of
the traditional summer surge, and the amplitude of the cycle
itself may have changed \citep{Malpezzi2023, HuHuang2025}. These changes represent a break in one of the most stable
empirical patterns in housing economics.

As a first step, we verify these shifts: using transaction data from Zillow, one of the largest real estate
platforms in the United States, we establish that the seasonal
cycle has indeed shifted, in both prices and sales. In the
post-2021 period, the spring months pick up earlier and more
strongly than they did before, while the summer term softens. The
seasonal cycle has not disappeared; it has moved forward in the
calendar year. The spring quarter gains roughly 2 percentage points in seasonal
price intensity and 6 percentage points in sales, with
similar and statistically significant contractions in summer, confirming
the earlier findings.

To understand what might be driving this shift, we turn to theory. In \citet{ngai_hot_2014}, arguably the leading theoretical account of housing market seasonality, the seasonal cycle in housing markets is driven by seasonal variation in household mobility. More households tend to move in summer than in winter. When many households move at the same time, more buyers and sellers enter the housing market simultaneously, making the market thicker. In a thick market, buyers are more likely to find a house that matches their preferences, which raises the probability of a transaction. Because matches are better on average, the surplus from trade is larger, and prices are higher as well. In winter, when fewer households move, the market becomes thin: matches are harder to find, transactions are less frequent, and prices are lower. In this way, the seasonal patterns of prices and sales are pinned down directly by the seasonal pattern of household mobility.

This mechanism has a sharp implication. The pattern of prices and transactions is inherited from the pattern of household mobility. If prices and sales have shifted earlier in the calendar year, then, all else equal, the model predicts that households themselves must be moving earlier as well. Did the timing of household moves shift earlier after 2021?

We find that it did. Our primary evidence comes from the Survey of
Income and Program Participation (SIPP), a nationally representative
longitudinal survey conducted by the US Census Bureau. SIPP follows
households over time and records detailed information on employment,
income, and residential mobility. Crucially for our purposes, respondents
report the calendar month in which they moved residence, allowing the
timing of household moves to be measured directly. The data reveal a
clear shift. June, historically the busiest moving month, loses 3.0
percentage points of annual moves post-2021, and July drops a further
1.9 percent. On the other side, March gains 1.8 percent and April gains 1.4 percent. In aggregate, the spring
quarter absorbs 4.0 additional percentage points of annual moves
post-2021, while the summer quarter contracts by 4.8 percent.
All changes are statistically significant.

We corroborate this finding with Google Trends search indices for eight keywords that capture different aspects of a household move: rental
trucks (\textit{moving truck}, \textit{uhaul}), professional movers
(\textit{atlas van lines}, \textit{mayflower moving}), packing
(\textit{moving supplies}), moving assistance (\textit{moving help}),
and administrative tasks (\textit{address forwarding},
\textit{thumbtack}). All eight show a statistically significant shift
of seasonal search activity from summer toward spring. Finally, a
recent report by MoveHQ, a leading logistics company that handles
household relocations across the US, documents the same pattern in its
own operational records covering over one million moves. Three
independent sources of evidence, each measuring different populations
using different methods, point to the same pattern: American households
are moving earlier in the year.

To link mobility patterns to prices and sales, we extend the setup in
NT from two seasons to twelve calendar months. The monthly frequency
is essential because the monthly shifts we document cannot be captured in a model that operates
at a semi-annual frequency. We prove the existence and uniqueness of
the housing market equilibrium and propose an algorithm based on
contraction mapping to numerically solve it. 

We then feed the pre- and post-2021 hazard rates obtained from the SIPP mobility profiles into the model calibration. Three features of the results are worth highlighting. First, the hump-shaped seasonal pattern of mobility translates into a hump-shaped profile of prices and transaction volumes, matching the qualitative pattern observed in the data. Market activity and prices are highest in summer, lowest in winter, and intermediate in spring and autumn.

Second, and more importantly, the observed shift in seasonal mobility produces a corresponding shift in the trajectory of prices and sales. As households begin to move earlier in the year, the housing market thickens earlier, raising matching rates and prices in spring at the expense of summer. In other words, the observed spring shift in both prices and transaction volumes is quantitatively consistent with a change in the timing of household mobility alone, requiring no change in preferences, technology, or market structure.

Third, the calibrated numbers broadly align with the price and transaction data. In the Zillow data, seasonal price deviations are modest, ranging from roughly $-7$ to $+6$ percent around the annual mean, while sales volumes swing far more dramatically, from around $-30$ to $+21$ percent. The model calibrated on SIPP moving hazard rates reproduces the price variation well, generating a seasonal range of $-5$ to $+8$ percent, in the right ballpark. It overshoots on volumes, however, producing swings of $-37$ to $+51$ percent, roughly twice the amplitude seen in the data. When we instead calibrate using Google Trends data on searches for \textit{moving truck}, the volume range narrows to $-41$ to $+35$ percent. This still remains more volatile than the Zillow benchmark, but it is meaningfully closer.

Overall, the calibrated model reproduces the key qualitative patterns in the data. When the seasonal distribution of household mobility shifts toward spring, the model generates an earlier rise in both housing prices and transaction activity, moving the seasonal cycle forward in the calendar year. Quantitatively, although it tends to overstate the magnitude of volume fluctuations, the calibration produces reasonable price amplitudes and the correct direction of seasonal shifts.

Why did the timing of household moves shift in the first place? This is a substantive question in its own right, and we do not attempt to resolve it in this paper. A natural candidate is the pandemic era surge in remote work, which stabilised at roughly 25-30$\%$ per cent of paid workdays from 2021 onwards, up from 5-7$\%$ before the pandemic \citep{BarreroBloomDavis2023}. By weakening the link between residence and workplace, work-from-home arrangements may have given households greater discretion over the timing of their relocations and loosened the constraints that historically anchored their moves to summer. But this remains a conjecture. The existing literature on post-pandemic mobility is almost entirely concerned with where households moved, such as the ``donut effect'', and the suburbanisation of demand. The question of why the timing changed remains open. We hope the documentation and modelling in this paper make that a more tractable question for future work.
\subsection{Related literature}\label{sec:litreview}
The observation that house prices and transaction volumes display pronounced within-year cycles has a long empirical history. \citet{Case_Shiller1989} note seasonal patterns in repeat-sale price indices. \citet{Goodman1993} documents that residential moves concentrate in summer and identifies the school calendar as a key driver, modelling the concentration of moves as the equilibrium outcome of a housing-market matching process; however, he notes that families with school-age children account for less than a third of all movers, too few to explain the full amplitude of the cycle on their own. More recently, \citet{MillerEtAl2013} estimate month-of-year pricing patterns using hedonic methods across a large cross-section of US metropolitan areas and find significant price variations during the year for most months and most markets.
 
Arguably, the most influential theoretical explanation of this phenomenon is due to \citet{ngai_hot_2014}, who show that these predictable cycles can arise from exogenous seasonal variation in household mobility.\footnote{Several papers, mostly building on search-and-matching theory, study housing market dynamics from a theoretical perspective; see, for instance, \citet{Krainer2001}, \citet{NovyMarx2009}, \citet{PiazzesiSchneider2009}, \citet{DiazJerez2013}, and \citet{Selcuk2014}. These papers focus on the response of prices and sales to unexpected shocks. In contrast, the framework in NT explains predictable seasonal cycles in prices and transactions as the equilibrium outcome of predictable cycles in household mobility.} In their search-and-matching framework, seasonal increases in mobility make the market thicker by bringing more buyers and sellers into the market at the same time. Thicker markets improve matching opportunities and raise the gains from trade, generating higher prices and transaction volumes during the hot season. \citet{GenesoveHan2012} confirm empirically that housing market liquidity is indeed responsive to market thickness, with sellers transacting faster and buyers searching less when more participants are active. Our paper recasts the NT framework from two seasons to twelve months (the original semi-annual version cannot capture a shift from summer to spring) and uses it to study the post-pandemic shift in mobility. We feed the observed change in monthly moving patterns into the model and examine the implied effects on prices and transactions.
 
A large and growing literature examines how the pandemic reshaped housing markets, with particular attention to the role of remote work in altering the spatial distribution of demand. \citet{GuptaEtAl2022} show that the pandemic flattened the bid-rent curve in US metropolitan areas, with prices and rents rising faster in suburban locations where remote work is more prevalent. \citet{RamaniBloom2021} document the ``donut effect,'' in which population and economic activity shifted from city centres to surrounding areas. \citet{MondragonWieland2022} provide evidence that remote work was a key driver of the post-2020 surge in aggregate housing demand. In contrast, our paper asks not where or how much households moved, but when, and how the timing of those moves shaped seasonal patterns in prices and sales. \citet{HuHuang2025} provide the closest precedent on this dimension, documenting post-pandemic changes in the timing and intensity of seasonal peaks. Our contribution is to embed such a shift in a structural search-and-matching framework and trace it to a similar seasonal shift in household mobility.
 
The rest of the paper proceeds as follows. Section~\ref{sec: document shifts} documents the shift in housing market seasonality using Zillow data. Section~\ref{sec:moves} establishes the corresponding shift in household moving patterns based on SIPP, MoveHQ, and Google Trends. Section~\ref{sec:model} presents the monthly search-and-matching model. Section~\ref{sec:calibration} calibrates the model and  Section~\ref{sec:conclusion} concludes.

\section{Seasonality in Prices and Sales} \label{sec: document shifts}

In this section we document the seasonality in house prices and sales
volumes, and how it changed after 2021. The data come from Zillow, a
large real estate platform that aggregates housing market information
derived from public records across the United States. We use two series:
the median sale price and the count of completed sales, both at monthly
frequency for the United States as a whole. The sample runs from February
2008 to June 2025, giving 209 monthly observations. We deflate prices by
the all-items CPI (not seasonally adjusted, FRED series CPIAUCNS) and
normalise so that the 2019 average equals 100.

\begin{figure}[htbp]
    \centering
    \begin{subfigure}{0.45\textwidth}
        \includegraphics[width=\linewidth]{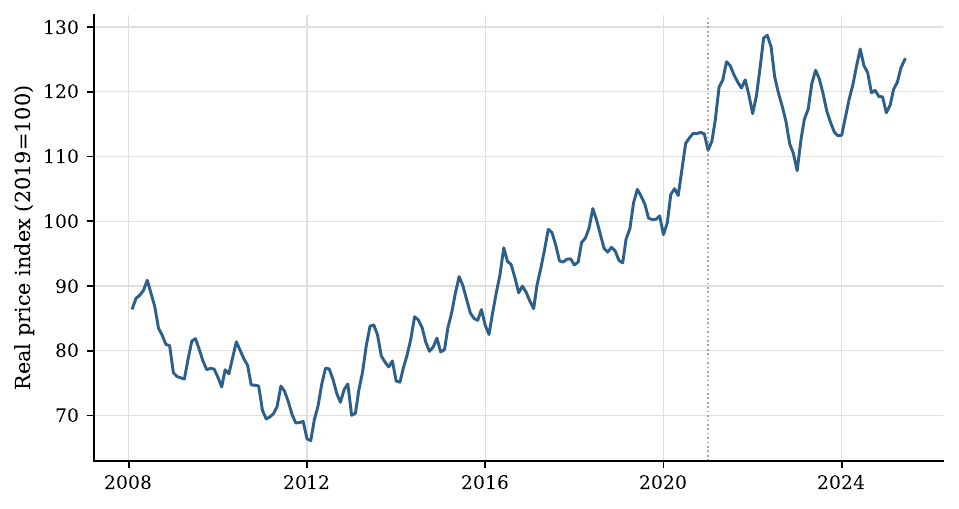}
        \caption{Real prices}
    \end{subfigure}\hfill
    \begin{subfigure}{0.45\textwidth}
        \includegraphics[width=\linewidth]{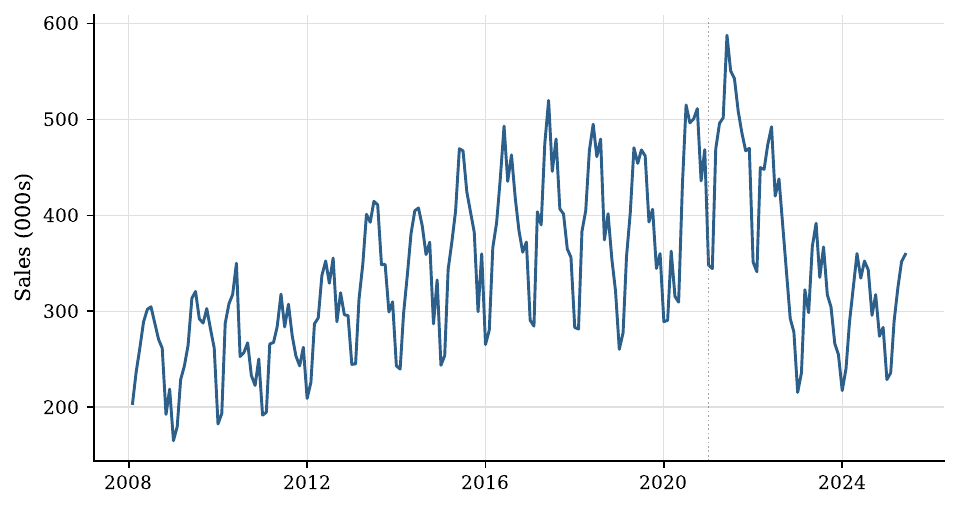}
        \caption{Sales}
    \end{subfigure}
    \caption{\small US housing market: real prices and sales (Zillow)}
    \label{fig:priceandsalesraw}
\end{figure}

Figure~\ref{fig:priceandsalesraw} plots the raw series for real house
prices and sales. Both series display clear and regular within-year
fluctuations, with prices and transactions rising during part of the year
and falling during another. To isolate the seasonal component, let $y_t$
denote the outcome observed at time $t$, where $T$ and $m \in
\{1,\dots,12\}$ index the calendar year and month of observation $t$
respectively. The standard approach is to extract
seasonal components via the two-way fixed-effects specification \citep{lovell_1963}: 
\begin{equation}
y_t = \alpha_T + \gamma_m + \epsilon_t,
\label{eq:fe_seasonal}
\end{equation}
where $\alpha_T$ are year fixed effects and $\gamma_m$ are month-of-year
fixed effects. The year effects absorb all annual movements in levels, so
the month effects $\gamma_m$ are identified entirely from within-year
variation. Imposing $\sum_{m=1}^{12}\gamma_m = 0$, the coefficient
$\gamma_m$ measures the average deviation of month $m$ from the annual
mean. In practice, we compute the seasonal component as
\begin{equation}
d_{m,T} = 100\,\frac{y_{m,T} - \bar{y}_T}{\bar{y}_T},
\label{eq:dmt}
\end{equation}
the percentage deviation of month $m$ in year $T$ from that year's annual
mean $\bar{y}_T$. By the Frisch--Waugh--Lovell theorem, subtracting each
year's mean from $y_t$ prior to estimation is algebraically equivalent to
projecting out the year fixed effects $\alpha_T$ in
equation~\eqref{eq:fe_seasonal}. The month coefficients recovered by
either approach are identical.

\begin{figure}[t!]
    \centering
    \begin{subfigure}{0.49\textwidth}
        \includegraphics[width=\linewidth]{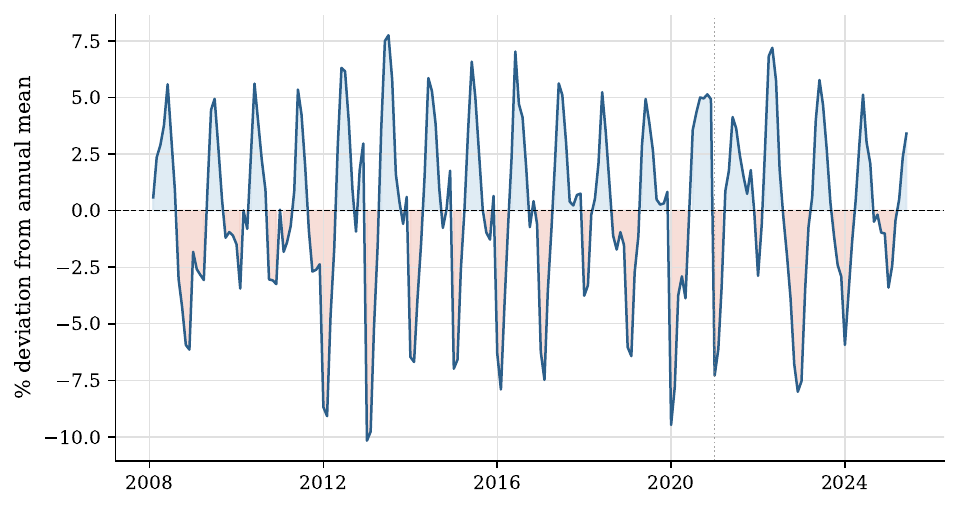}
        \caption{Real prices}
    \end{subfigure}\hfill
    \begin{subfigure}{0.49\textwidth}
        \includegraphics[width=\linewidth]{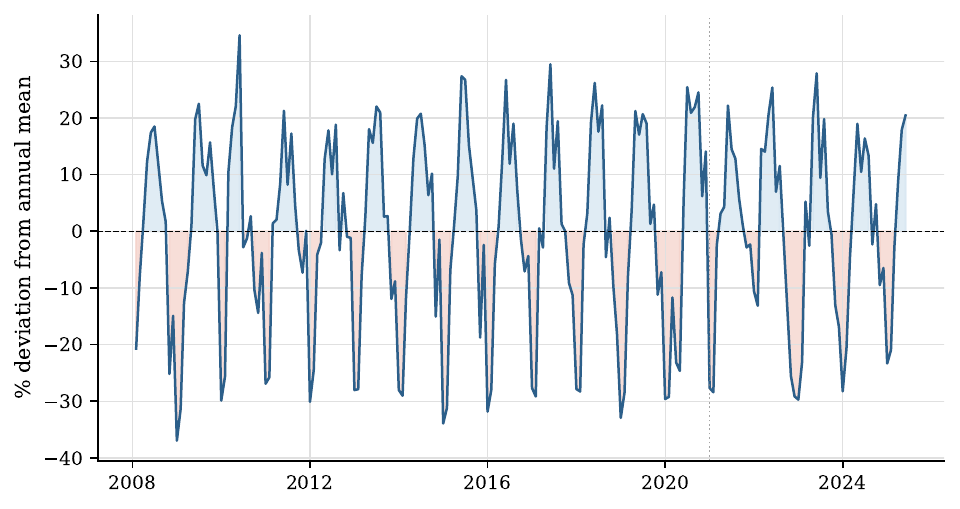}
        \caption{Sales}
    \end{subfigure}
    \caption{\small Seasonal components of US house prices and sales
    (Zillow). Each series plots $d_{m,T}$, the percentage deviation of
    month $m$ in year $T$ from that year's annual mean.}
    \label{fig:seasonalcomponents}
\end{figure}

Figure~\ref{fig:seasonalcomponents} plots $d_{m,T}$ for both series,
confirming strong and regular within-year cycles. Prices rise through
spring, reach a high point in early summer, and fall back in winter,
moving within a range of roughly $\pm6\%$ around the annual mean. Sales
volumes exhibit the same pattern but with a considerably larger amplitude,
oscillating between $-30\%$ and $+20\%$. Prices and sales move together,
rising through spring, peaking in summer, and falling through autumn and
winter.

The pattern is stable across most of the sample, but something changes
around 2021.\footnote{The choice of 2021 as the break date is not imposed.
In Appendix \ref{appendix:B} we scan all candidate break years from 2013 to 2023 using a
Chow $F$-test for stability of the seasonal components. The statistic is
small and insignificant at every candidate year from 2014 to 2020, then
spikes at 2021 for both prices and sales; see
Table~\ref{tab:break_tests} therein.} Figure~\ref{fig:Zillow-pre-post}
plots the average of $d_{m,T}$ separately for the pre-2021 and post-2021
periods and reveals a clear shift. In both series, the post-2021 profile
rises earlier in the year. Spring and late winter months strengthen, while summer and autumn weaken relative to pre-2021.

\begin{figure}[htbp]
    \centering
    \begin{subfigure}{0.48\textwidth}
        \includegraphics[width=\linewidth]{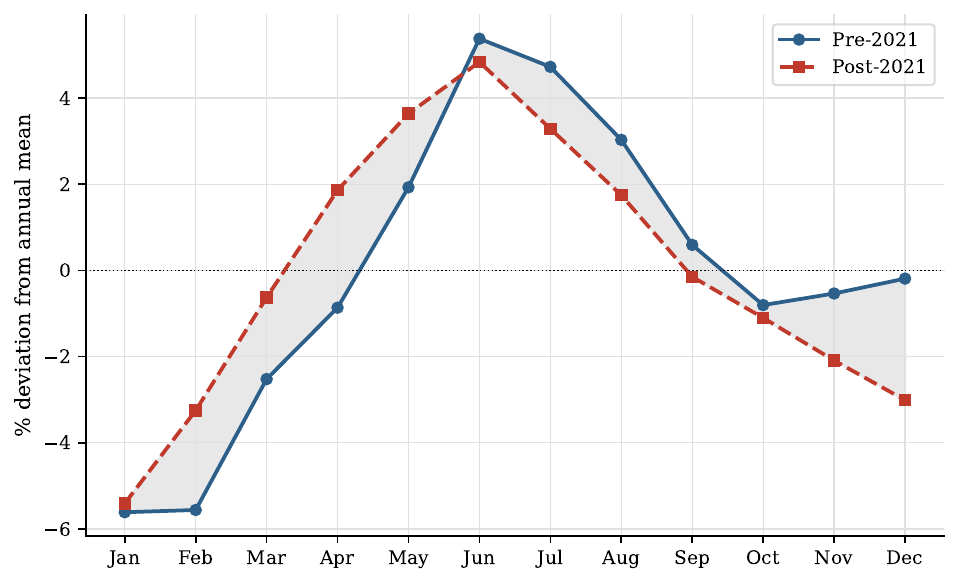}
        \caption{Prices}
    \end{subfigure}\hfill
    \begin{subfigure}{0.49\textwidth}
        \includegraphics[width=\linewidth]{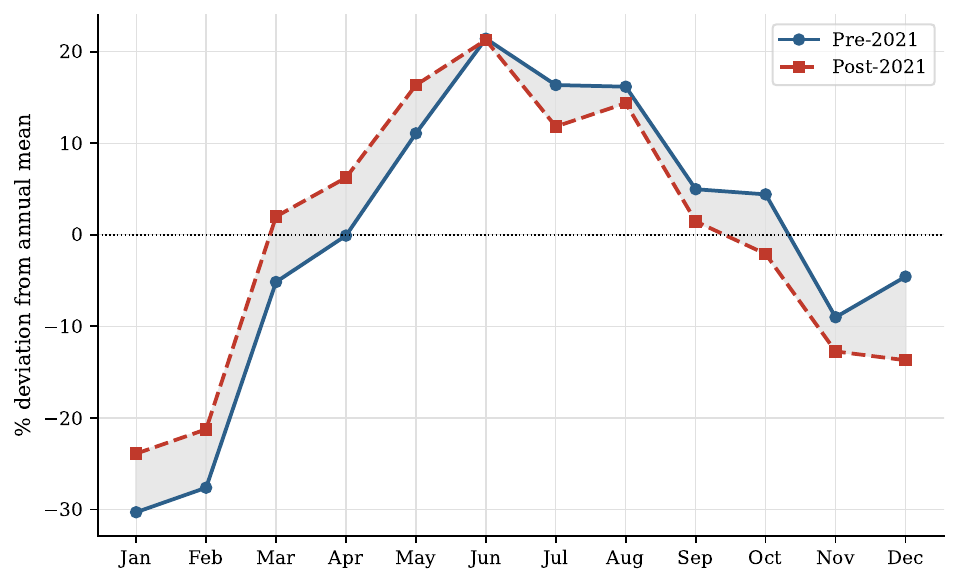}
        \caption{Sales}
    \end{subfigure}
    \caption{\small Seasonal components by month, pre- and post-2021
    (Zillow). Each series plots the average of $d_{m,T}$ within the
    respective period.}
    \label{fig:Zillow-pre-post}
\end{figure}

To assess whether this shift is statistically significant, we augment
equation~\eqref{eq:fe_seasonal} with a post-2021 indicator
$\text{POST}_t = \mathbf{1}\{t \geq \text{Jan 2021}\}$ and its
interactions with the month dummies,
\begin{equation}
y_t = \alpha_T + \gamma_m + \psi\,\text{POST}_t
    + \sum_{m=1}^{12}\mu_m \mathbf{1}\{m_t = m\}\cdot\text{POST}_t
    + \epsilon_t,
\label{eq:fe_shift}
\end{equation}
where $m_t$ denotes the calendar month of observation $t$, and we impose
$\sum_{m=1}^{12}\mu_m = 0$ for identification. We apply two complementary
tests. The results are reported in Table~\ref{tab:zillow_tests}.

The first test asks whether the seasonal profile changed after 2021. The evidence says yes.  A joint heteroskedasticity-robust $F$-test of
$H_0\colon \mu_m = 0$ for all $m$ \citep{stock_watson_2008}
rejects the null of a stable seasonal profile for both series
($p = 0.011$ for prices and $p = 0.007$ for sales).

The second test asks whether the change runs in the ``right direction": did the
first half of the year strengthen relative to the second? We test the
one-sided contrast $\bar\mu_{\text{H1}} - \bar\mu_{\text{H2}} > 0$,
where H1 denotes January--June and H2 denotes July--December, using
HC1-robust standard errors. The contrast is positive and significant for
both series ($p_1 = 0.013$ for prices and 
$p_1 = 0.010$ for sales), confirming that the shift runs from the second
half of the year toward the first.

The seasonal $\Delta$ columns in Table~\ref{tab:zillow_tests} show where
the reallocation is concentrated. For sales, the pattern is clean across
all four seasons: spring gains the most (+6.3\%), winter strengthens somewhat
(+1.2\%), while summer softens (--2.2\%) and autumn contracts the most
(--4.6\%). For prices the story is similar but winter is essentially flat,
with the gains concentrated in spring (+2.1\%) and the losses spread
across summer and autumn. Taken together, the evidence points to a
seasonal cycle that has shifted earlier in the calendar year: the active
season now begins sooner and ends sooner, with spring and late winter
strengthening at the expense of summer and autumn.\footnote{We have also
conducted a spring-share test comparing the contribution of spring months
to total seasonal variation before and after 2021, and individual
directional contrasts for spring gaining and summer losing. All are
statistically significant. We omit these from the table to avoid
overloading the paper with repetitive tests.}

\begin{table}[htbp]
\centering
\caption{Zillow: tests for a post-2021 seasonal shift}
\label{tab:zillow_tests}
\begin{tabular}{l cc cc cccc}
\toprule
 & \multicolumn{2}{c}{Joint $F$-test}
 & \multicolumn{2}{c}{Dir.\ contrast (H1--H2)}
 & \multicolumn{4}{c}{Seasonal $\Delta$ (pp)} \\
\cmidrule(lr){2-3}\cmidrule(lr){4-5}\cmidrule(lr){6-9}
Series & $F$ & $p$ & $t$ & $p_1$
       & Win.\ $\Delta$ & Spr.\ $\Delta$
       & Sum.\ $\Delta$ & Aut.\ $\Delta$ \\
\midrule
Prices & 2.31 & 0.011$^{*}$  & 2.25$^{*}$ & 0.013$^{*}$
       & $-0.1$ & $+2.1$ & $-1.1$ & $-0.9$ \\
Sales  & 2.45 & 0.007$^{**}$ & 2.34$^{*}$ & 0.010$^{*}$
       & $+1.2$ & $+6.3$ & $-2.2$ & $-4.6$ \\
\bottomrule
\end{tabular}%
\begin{tablenotes}
\small
\item The joint $F$-test examines $H_0\colon \mu_m = 0$ for all $m$,
where $\mu_m$ are the month$\times$post-2021 interaction coefficients from
equation~\eqref{eq:fe_shift} (HC1-robust). The H1--H2 directional
contrast tests the one-sided hypothesis $\bar\mu_{\textit{H1}} -
\bar\mu_{\textit{H2}} > 0$, where H1 denotes January--June and H2 denotes
July--December; $p_1$ is a one-sided $p$-value. Seasonal $\Delta$ columns
report post-minus-pre-2021 changes in average seasonal deviation (\%) for
each season: winter (December--February), spring (March--May), summer
(June--August), and autumn (September--November). Both series cover
February 2008 to June 2025 ($N=209$). $^{**}p<0.01$, $^{*}p<0.05$,
$^{\dagger}p<0.10$.
\end{tablenotes}
\end{table}

\medskip

\textit{Robustness.--- }As a robustness check, we re-estimate the seasonal components using
X-13 ARIMA-SEATS \citep{census_x13_2017}, the seasonal adjustment
programme developed by the US Census Bureau and used as the standard
tool by major statistical agencies worldwide. Unlike the annual-mean de-trending in equation~\eqref{eq:dmt}, X-13 separates the observed
series into trend, seasonal, and irregular components using an
iterative moving-average filter, with seasonal factors that are
allowed to evolve gradually over time. The resulting seasonal components, plotted in Figure~\ref{fig:zillowX13}, are strikingly similar to Figure~\ref{fig:seasonalcomponents}. Prices
oscillate within roughly $\pm 6\%$ around the trend and sales
between approximately $-30\%$ and $+20\%$, almost identical to what we have in the main text. 

Applying the same joint $F$-test and H1--H2 directional contrast to
the X-13 seasonal components confirms the same directional shift with stronger statistical significance (Table~\ref{tab:x13_tests}). The
seasonal $\Delta$ columns also show the same pattern as in 
Table~\ref{tab:zillow_tests}: spring strengthens, summer and autumn
weaken. Taken together, the X-13 results confirm that the shift toward earlier
seasonal activity is not an artefact of the annual-mean detrending
method, but a robust feature of the data. (Full results are reported in Appendix~\ref{appendix:B}.)

\section{Household Mobility Patterns}
\label{sec:moves}
The preceding section documented a shift in the seasonal pattern of prices and sales.
In the NT framework, these patterns arise from seasonal variation in
household mobility. If the seasonal profile of prices has shifted, then, all else
equal, the model predicts a corresponding shift in the timing of household moves.
Did mobility also shift earlier in the year? Evidence from three independent
sources (SIPP, Google Trends, and MoveHQ) suggests that it did.

\subsection{SIPP}
\label{subsec:sipp}
 
Our first source is the Survey of Income and Program Participation (SIPP),
administered by the U.S. Census Bureau. SIPP is a nationally representative
longitudinal survey designed to provide comprehensive information about the
income, programme participation, and economic well-being of individuals and
households in the United States. Respondents include all household members aged 15 and older, with interviews conducted via personal visits and telephone. Beginning with the 2018 panel, SIPP adopted an overlapping panel design in
which new panels are initiated each year and run concurrently.\footnote{The 2023 data
collection, for example, encompasses the first wave of the 2023 panel, the
second wave of the 2022 panel, the third wave of the 2021 panel, and the fourth of the 2020 panel, all referred to as the 2023
SIPP.} Crucially for our purposes, residential characteristics and household
composition are collected at a monthly frequency for each month of the
reference period, rather than only at the time of interview. SIPP also tracks
movers over time: when original sample members relocate to a new address,
interviewers attempt to locate and re-interview them at their new residence in
subsequent waves. The residence history module asks respondents to report the
calendar month in which they moved to their current address, as well as prior
addresses within the reference year. This allows us to identify the set of
unique movers across the panel and, critically, the calendar month in which
each move occurred.

\begin{figure}[htbp]
    \centering
    \begin{subfigure}{0.49\textwidth}
        \includegraphics[width=\linewidth]{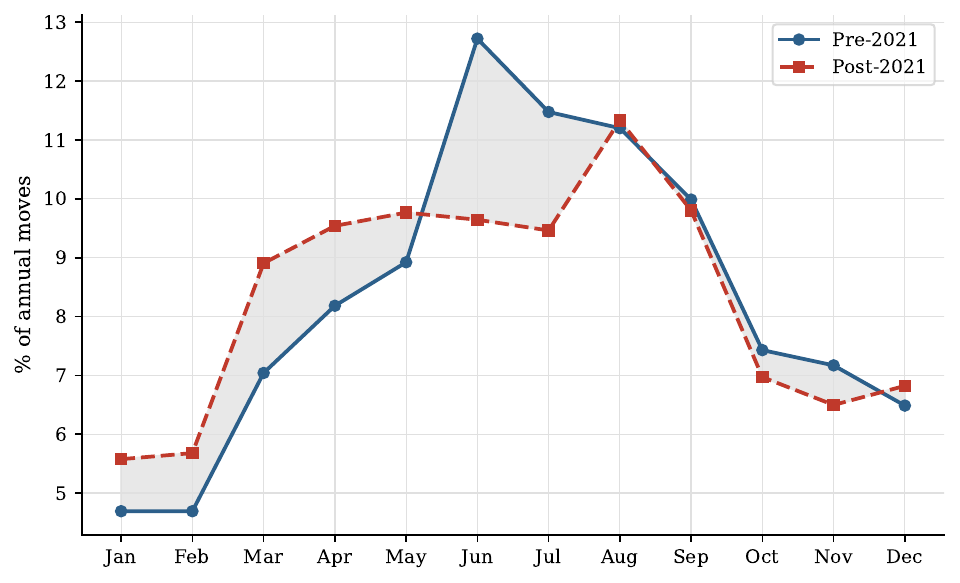}
        \caption{SIPP}
        \label{fig:Percentageofhosueholdmoves-panel-a}
    \end{subfigure}\hfill
    \begin{subfigure}{0.49\textwidth}
        \includegraphics[width=\linewidth]{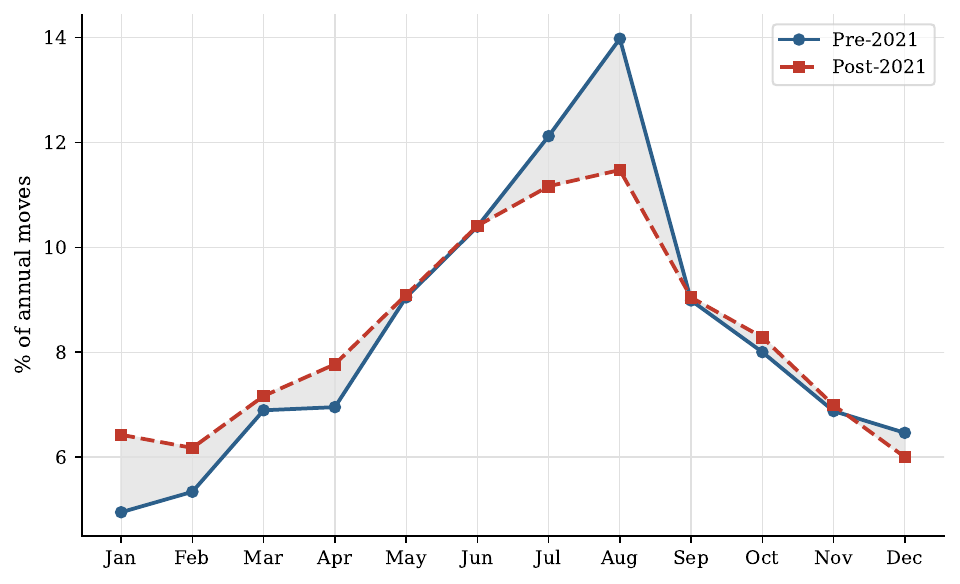}
        \caption{MoveHQ}
        \label{fig:Percentageofhosueholdmoves-panel-b}
    \end{subfigure}
    \caption{\small Percentage of household moves by month: pre-2021 and post-2021.}
    \label{fig:Percentageofhosueholdmoves}
\end{figure}

Figure~\ref{fig:Percentageofhosueholdmoves-panel-a} plots the average monthly
share of annual moves for the pre-2021 period (2017--2019) and the post-2021
period (2021--2023). The broad hump shape is present in both series. The key
difference, however, lies in the timing. Pre-2021, the distribution rises
steeply into June, which accounts for 12.7\% of all annual moves. Post-2021, June's share falls to 9.7\%, while March and April each gain around two
percentage points. The spring quarter as a whole absorbs 28.2\% of annual moves
post-2021, up from 24.1\% pre-2021, while the summer quarter contracts from
35.3\% to 30.5\%. The autumn and winter months remain largely unchanged. The moving pattern, just like house prices and sales, has shifted earlier in the calendar year.

\begin{table}[t!]
\centering
\caption{SIPP: Monthly distribution of household moves}
\label{tab:move_shares}
\begin{tabular}{lrrrr}
\toprule
 & {Pre-2021} & {Post-2021} & {Difference} & \\
Month & {(\%)} & {(\%)} & {(pp)} & {$t$-statistic} \\
\midrule
January &   4.7 &   5.5 &  0.79 &  0.25 \\
February &   4.7 &   5.6 &  0.92 &  1.78$^{*}$ \\
March &   7.1 &   8.9 &  1.80 &  2.80$^{***}$ \\
April &   8.1 &   9.5 &  1.38 &  1.93$^{*}$ \\
May &   8.9 &   9.8 &  0.86 &  1.31 \\
June &  12.7 &   9.7 & -3.02 & -3.74$^{***}$ \\
July &  11.4 &   9.5 & -1.91 & -2.35$^{**}$ \\
August &  11.3 &  11.4 &  0.12 &  0.17 \\
September &  10.0 &   9.9 & -0.12 & -0.16 \\
October &   7.4 &   7.0 & -0.46 & -0.83 \\
November &   7.1 &   6.5 & -0.63 & -1.06 \\
December &   6.5 &   6.8 &  0.28 &  0.45 \\
\midrule
Spring (Mar--May) &  24.1 &  28.2 &  4.03 &  3.25$^{***}$ \\
Summer (Jun--Aug) &  35.3 &  30.5 & -4.82 & -3.21$^{***}$ \\
\midrule
\multicolumn{5}{l}{Joint $F$-test: $F = 6.725$,\ $p < 0.001$} \\ 
\bottomrule
\end{tabular}
\begin{tablenotes}
\item \small Pre-2021 denotes the period 2017--2019; post-2021 denotes 2021--2023. Each column reports the weighted share of annual moves occurring in that month, averaged across years within each period. Shares are weighted using final person weights (\texttt{WPFINWGT}). $t$-statistics are from design-adjusted regressions using balanced repeated replication (BRR) with Fay correction ($\rho=0.5$), clustering at the sample-unit (\texttt{SSUID}) level. The joint $F$-statistic is the Rao--Scott design-adjusted test of homogeneity. $^{*}$$p<0.10$,\ $^{**}$$p<0.05$,\ $^{***}$$p<0.01$.
\end{tablenotes}
\end{table}

We formally test whether the seasonal distribution of moves changed after
2021 using survey-weighted Rao–Scott tests of equality of the monthly
distribution across periods. All estimates are weighted by the
final person weight (\texttt{WPFINWGT}), and standard errors are obtained
via balanced repeated replication using the 240 Census replicate weights
with a Fay correction factor of $\rho = 0.5$, as recommended for the SIPP's
multi-stage design \citep{SelectWeights2024}.
 
The results are reported in Table~\ref{tab:move_shares}. The joint $F$-test strongly rejects the null that the pre- and post-2021
monthly distributions are drawn from the same underlying seasonal profile
($F= 6.725$, $p < 0.001$). The month-by-month breakdown
confirms that the shift is concentrated in spring and late winter. February gains 0.92 percentage points, March gains 1.8, April gains 1.4, while June and July drop 3 and  1.9 percentage points. Taken together, the spring quarter absorbs 4 additional percentage points of annual moves post-2021, while the summer quarter contracts by about 4.8, all statistically significant. 

Before we move to the next section, note that we exclude the 2021 SIPP panel in our analysis. This panel covers reference year 2020, the acute phase of the pandemic, when lengthy lockdowns and major disruptions were still in place, and the Census Bureau issued a formal user note warning that the 2020 collection was compromised \citep{Census2020UserNote}.\footnote{In March 2020, field operations switched to telephone-only interviewing and remained so through the end of the collection period. Unit response rates fell sharply, leading the Census Bureau to conclude that the nonresponse weights for 2020 do not meet its statistical quality standards.} Although reference year 2020 is excluded from the main analysis, we include it as a robustness check in Appendix \ref{appendix:B}. Table~\ref{tab:move_shares_robustness} therein  reports results when 2020 is added to the pre-period, comparing 2017--2020 with 2021--2023. The results are very similar to the baseline specification: the spring gain and summer loss remain statistically significant, the joint F-test continues to reject homogeneity of the seasonal distribution, and the estimates for March, June, and July retain their sign and significance. February and April remain positive but are no longer individually significant.

\subsection{MoveHQ}

Our second source is a report by MoveHQ, a U.S. moving-industry software platform used by
professional moving and storage firms. The data in Figure \ref {fig:Percentageofhosueholdmoves-panel-b} come from the company's 2022
Moving Trends Report \citep{movehq_2022}, which draws on reportedly over one
million household moves recorded on its platform across 2019--2022, covering
both renters and homeowners. As the underlying microdata are proprietary, our analysis is based on the monthly figures reported in their publication.

Figure~\ref{fig:Percentageofhosueholdmoves-panel-b} plots the average monthly
distribution of moves for the pre-2021 period (2019--2020) and the post-2021
period (2021--2022). As in the SIPP data, the timing of moves shifts toward spring and away from
summer. August (the single busiest month pre-2021) declines noticeably in the post-2021 average, with July
showing a similar pattern. The earlier months, by contrast, strengthen: January through April register higher shares post-2021, and the report itself
highlights these as the months with the largest gains. This pattern closely mirrors the shift observed in the SIPP data and, coming from an independent source, corroborates the interpretation that the change reflects a broader shift in household mobility.

\begin{figure}[htbp]
    \centering
    \includegraphics[width=\linewidth]{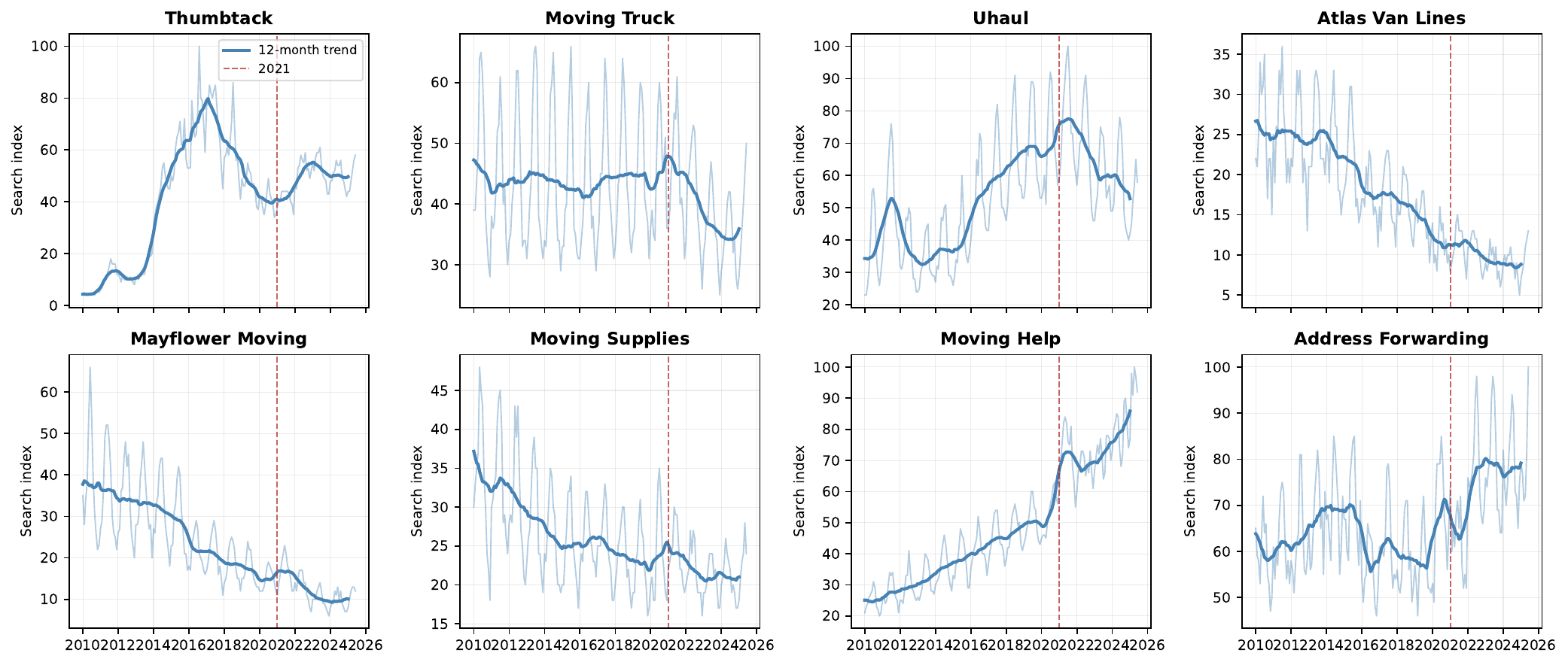}
    \caption{\small Raw monthly search interest and 12-month trend for
    eight moving-related keywords (Google Trends, US, 2010--2025). The dashed
    vertical line marks January 2021.}
    \label{fig:googletrends-raw}
\end{figure}

\subsection{Google Trends}\label{subsection:google}

Our third source is Google Trends search data. We obtain monthly search
interest indices from Google Trends for eight moving-related keywords:
\textit{moving truck}, \textit{uhaul}, \textit{atlas van lines},
\textit{mayflower moving}, \textit{moving supplies}, \textit{moving help},
\textit{address forwarding}, and \textit{thumbtack}. These terms cover many
aspects of a household move (vehicle rental, professional movers, packing
supplies, and administrative tasks) and collectively capture a broad
cross-section of moving-related search activity. The data cover January 2010
through June 2025 for the US. Figure~\ref{fig:googletrends-raw} plots raw
search interest alongside the 12-month trend for each keyword. All keywords
exhibit pronounced and highly regular seasonal spikes, with search activity
rising sharply during the summer months each year.

Let $g_{m,T}$ denote the Google Trends search interest index for calendar
month $m$ in year $T$, normalised by Google to the range $[0,100]$ within
each query window.\footnote{Because Google rescales each download relative to
the maximum observation in the requested window, indices from different
downloads are not directly comparable in levels. All analysis below is
therefore conducted in terms of percentage deviations from a within-series trend, which
are invariant to Google's rescaling.} For each keyword, we compute the percentage deviation of monthly search
interest from its 12-month rolling mean,
\begin{equation*}
    \tilde{g}_{m,T} \;=\; 100\,\frac{g_{m,T} - \bar{g}_{T}}{\bar{g}_{T}}.
\end{equation*}
This is the Google Trends analogue of the Zillow seasonal component $d_{m,T}$ defined in
equation~\eqref{eq:dmt}.

\begin{figure}[htbp]
    \centering
    \includegraphics[width=\linewidth]{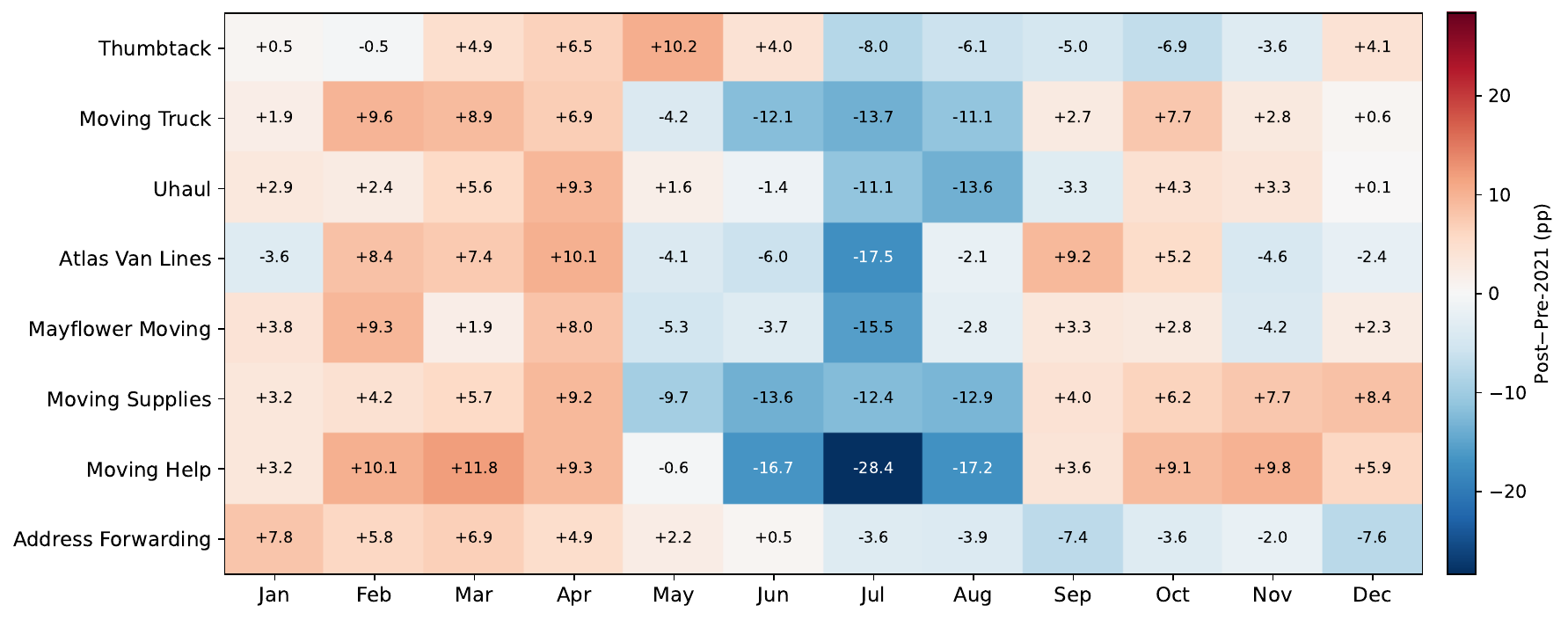}
    \caption{\small Each cell is the post-2021 minus pre-2021 difference in average search intensity for that keyword and month. Red indicates stronger seasonal search activity post-2021, blue weaker. Summer months are consistently blue across all keywords, indicating weaker search activity in the post-2021 period relative
to the pre-2021 period. Spring months, by contrast, show relatively stronger
activity post-2021.}
    \label{fig:googletrends-heatmap}
\end{figure}

We average $\tilde{g}_{m,T}$ by calendar month, separately for the pre-2021
period (2010--2020) and the post-2021 period (2021--2025), and compute the
difference. Figure~\ref{fig:googletrends-heatmap} plots these
post-minus-pre-2021 differences for all keywords and months simultaneously. Across all eight keywords, the spring months show positive differences, while
summer months show negative ones. The shift is largest for \textit{moving
help} and \textit{moving truck}, but the pattern holds for every keyword without exception. The autumn and winter months are mixed and smaller in magnitude.

To assess statistical significance, we estimate equation~\eqref{eq:fe_shift}
with $y_t = \tilde{g}_{m,T}$ as the dependent variable and
$\text{POST}_t = \mathbf{1}\{t \geq \text{Jan 2021}\}$ as the post-period indicator,
with year fixed effects $\alpha_T$ absorbing residual year-to-year variation
in search levels, and apply a heteroskedasticity-robust joint $F$-test of
$H_0\colon \mu_m = 0$ for all $m$. The results are reported in
Table~\ref{tab:google_ftest}. The null is rejected for all eight keywords at
the 5\% level. The pattern is consistent across all keywords and corroborates
the SIPP and MoveHQ findings through a third, independent data source.

\begin{table}[htbp]
\centering
\caption{Google Trends: heteroskedasticity-robust $F$-tests for seasonal
shift}
\label{tab:google_ftest}
\begin{tabular}{lccccc}
\toprule
Keyword & $N$ & $F$ & $p$-value & Spring $\Delta$ & Summer $\Delta$ \\
\midrule
Moving Truck       & 181 &  6.82 & $<$0.001$^{***}$ & $+$2.8 & $-$13.4 \\
Uhaul              & 181 &  3.35 & $<$0.001$^{***}$ & $+$5.0 & $-$9.2  \\
Atlas Van Lines    & 181 &  2.23 & 0.015$^{**}$    & $+$3.4 & $-$9.6  \\
Mayflower Moving   & 181 &  2.20 & 0.017$^{**}$    & $+$0.3 & $-$8.6  \\
Moving Supplies    & 181 &  7.20 & $<$0.001$^{***}$ & $+$1.5 & $-$13.2 \\
Moving Help        & 181 & 11.77 & $<$0.001$^{***}$ & $+$7.1 & $-$20.5 \\
Address Forwarding & 181 &  3.28 & $<$0.001$^{***}$ & $+$4.7 & $-$2.2  \\
Thumbtack          & 181 &  2.99 & 0.001$^{***}$    & $+$8.0 & $-$2.6  \\
\bottomrule
\end{tabular}
\begin{tablenotes}
\small
\item The dependent variable is $\tilde{g}_{m,T}$, the monthly percentage
deviation of search interest from the centred 12-month rolling mean. The $F$-statistic tests the joint
significance of eleven month$\times$post-2021 interaction terms from
equation~\eqref{eq:fe_shift}. Spring $\Delta$ and Summer $\Delta$ report
the change in average $\tilde{g}_{m,T}$ in percentage points.
$^{**}p < 0.05$, $^{***}p < 0.01$.
\end{tablenotes}
\end{table}

\paragraph{Proxying mobility rates.} Among the eight keywords, we select
\textit{moving truck} as the input for the model calibration in
Section~\ref{sec:calibration}. It is a generic term that does not mix brand loyalty with moving activity (unlike \textit{uhaul} or \textit{mayflower
moving}) and is tied directly to the physical act of moving house (unlike
\textit{address forwarding} or \textit{thumbtack}). To construct monthly move
share proxies, we compute each month's share of annual search volume for each
calendar year and average across years within each period:
\begin{equation*}
s_m^{j} = \frac{1}{|\mathcal{T}_j|}
    \sum_{T \in \mathcal{T}_j}
    \frac{g_{m,T}}{\sum_{m'=1}^{12} g_{m',T}},
    \qquad j \in \{\text{pre}, \text{post}\},
\end{equation*}
where $\mathcal{T}_{\text{pre}} = \{2010,\ldots,2020\}$ and
$\mathcal{T}_{\text{post}} = \{2021,\ldots,2025\}$. By construction,
$\sum_{m=1}^{12} s_m^{j} = 1$. These shares are fed into the model as the
seasonal distribution of moving activity.

\section{Model}
\label{sec:model}
What is the link between household mobility patterns and housing market seasonality?
To address this question, we recast the search-and-matching framework of
\citet{ngai_hot_2014} at a monthly frequency and calibrate it using the mobility
hazard rates implied by the moving patterns documented above. Calibrating the model at the monthly level allows us to move beyond the
two-period structure in NT and capture the finer seasonal dynamics observed in
the data. To our knowledge, this is the first attempt in the literature to
implement the NT framework at a monthly frequency.
\subsection{Environment}
Here we briefly outline the key elements of the search-and-matching framework of NT. A full description and detailed explanations can be found in the original paper. Consider a household at the beginning of month $m$, who holds a match of quality
$\varepsilon$. With probability
$\phi_{m+1}$ the match survives into the following month, while with
probability $1-\phi_{m+1}$ it dissolves and the household becomes a mover.
Let $H_m(\varepsilon)$ denote the value of holding a match of quality
$\varepsilon$ at the start of month $m$. This value satisfies the Bellman
equation
\begin{equation}
H_m(\varepsilon) = \varepsilon + \beta\big[\phi_{m+1} H_{m+1}(\varepsilon)
+ (1-\phi_{m+1}) X_{m+1}\big],
\label{Hm}
\end{equation}
where $\beta\in(0,1)$ is the monthly discount factor and $X_{m+1}$ is the
continuation value of entering month $m+1$ as a mover.

A mover is a household that simultaneously sells their current home and searches
for a new one. The value of entering month $m$ as a mover is denoted $X_m$.
Buyers in month $m$ sample match quality $\varepsilon$ from a distribution
$F_m(\varepsilon; v_m)$, where $v_m$ is the stock of houses listed for sale. A
thicker market raises the quality of available matches in the sense of
first-order stochastic dominance.

A mover who draws match quality $\varepsilon$ in month $m$ accepts the match
if and only if $\varepsilon \geq \varepsilon_m$. The reservation quality
$\varepsilon_m$ equates the value of accepting a match at the margin with the
value of remaining unmatched, collecting the flow payoff $u$,  and continuing the search next month:
\begin{equation}
H_m(\varepsilon_m) = \beta X_{m+1} + u.
\label{epsilon}
\end{equation}
It is convenient to write the net surplus from an accepted match as
\[
S_m(\varepsilon) = H_m(\varepsilon) - \beta X_{m+1} - u,
\]
which is zero at the cutoff and positive for accepted matches. Using this
notation, the value of being a mover can be expressed as
\begin{equation}
X_m = \beta X_{m+1} + u + \big[1-F_m(\varepsilon_m; v_m)\big]\,
\mathbb{E}[S_m(\varepsilon)\mid \varepsilon \geq \varepsilon_m].
\label{Xm}
\end{equation}
The first two terms capture the fallback value of waiting, while the final
term reflects the expected gains from trade today, weighted by the probability
of meeting a house above the cutoff.

The stock of vacancies evolves as listings roll over and new movers enter. A
fraction of last month's listings remain unsold, while households receiving a
moving shock in month $m$ contribute new vacancies:
\begin{equation}
v_m = 1-\phi_m + \phi_m\, v_{m-1}\, F_{m-1}(\varepsilon_{m-1}; v_{m-1}).
\label{vm}
\end{equation}
The number of transactions in month $m$ is
\begin{equation}
Q_m = v_m\,\big[1-F_m(\varepsilon_m; v_m)\big].
\label{Qm}
\end{equation}
Prices are determined through Nash bargaining. With bargaining weight $\theta$
on the seller, the average transaction price is
\begin{equation}
P_m = (1-\theta)\frac{u}{1-\beta} + \theta H_m(\varepsilon_m) +
\theta\,\mathbb{E}[S_m(\varepsilon)\mid \varepsilon \geq \varepsilon_m].
\label{Pm}
\end{equation}
The first component is constant across months; the seasonal variation comes
from the surplus term, which depends on vacancies through the match
distribution.

\begin{definition}
An equilibrium is a twelve-periodic object
$\{H_m, \varepsilon_m, X_m, v_m, Q_m, P_m\}_{m=1}^{12}$ such that
homeowner values $H_m$ satisfy \eqref{Hm}, reservation cutoffs $\varepsilon_m$
satisfy \eqref{epsilon}, mover values $X_m$ satisfy \eqref{Xm}, vacancies
$v_m$ satisfy \eqref{vm}, transactions $Q_m$ satisfy \eqref{Qm}, and prices
$P_m$ satisfy \eqref{Pm}.
\end{definition}

For analytic tractability, we assume that $\varepsilon$ is uniformly
distributed on $[0, v_m]$. The cdf associated with a higher $v_m$ first-order
stochastically dominates that with a lower $v_m$, which generates the
thin-thick market effect underlying the seasonal cycles.

\begin{proposition}
\label{exists and unique}
The housing market equilibrium exists and it is unique.
\end{proposition}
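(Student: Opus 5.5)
The plan is to reduce the equilibrium to a fixed-point problem in the twelve reservation cutoffs $(\varepsilon_1,\dots,\varepsilon_{12})$ alone, and then show that the resulting map is a contraction in the sup norm. Existence and uniqueness then both follow from the Banach fixed-point theorem. Given the cutoffs, every other equilibrium object is pinned down recursively.

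\textbf{Step 1 (linearity of $H_m$).} Guess $H_m(\varepsilon)=a_m\varepsilon+b_m$. Substituting into \eqref{Hm} gives two conditions:
\[
a_m=1+\beta\phi_{m+1}a_{m+1},\qquad b_m=\beta[\phi_{m+1}b_{m+1}+(1-\phi_{m+1})X_{m+1}].
\]
The first is a twelve-periodic linear system with discount factor $\beta\phi<1$, so it has a unique positive solution $a_m\ge 1$ that does not depend on any endogenous variable. Hence $S_m(\varepsilon)=a_m(\varepsilon-\varepsilon_m)$.

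\textbf{Step 2 (the uniform specification).} With $\varepsilon\sim U[0,v_m]$ and $0\le\varepsilon_m\le v_m$, we have $1-F_m=(v_m-\varepsilon_m)/v_m$ and $\mathbb{E}[S_m\mid\cdot]=a_m(v_m-\varepsilon_m)/2$. The gain term in \eqref{Xm} is therefore $G_m=a_m(v_m-\varepsilon_m)^2/(2v_m)$. Likewise \eqref{vm} collapses to
\[
v_m=1-\phi_m+\phi_m\varepsilon_{m-1},
\]
so vacancies depend only on the previous month's cutoff.

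\textbf{Step 3 (eliminating $X$ and $b$).} Define $D_m=\beta X_{m+1}+u-b_m$. By \eqref{epsilon}, $D_m=a_m\varepsilon_m$. Using the recursions for $b_m$ and $X_{m+1}$ from \eqref{Xm}, one checks that
\[
D_m=u+\beta\phi_{m+1}(D_{m+1}+G_{m+1}).
\]
Substituting $D_m=a_m\varepsilon_m$ and the expression for $G_{m+1}$ yields the cutoff system
\[
\varepsilon_m=\frac{u+\beta\phi_{m+1}a_{m+1}\,\psi\big(\varepsilon_{m+1},\,1-\phi_{m+1}+\phi_{m+1}\varepsilon_m\big)}{a_m}\equiv T(\varepsilon)_m,
\qquad \psi(\varepsilon,v)=\frac{v^2+\varepsilon^2}{2v}.
\]
Any equilibrium is a fixed point of $T$. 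Conversely, a fixed point of $T$ determines $v_m$, then $X_m$ from the periodic linear system \eqref{Xm} (which has a unique solution since $\beta<1$), then $b_m$, $Q_m$ and $P_m$. So existence and uniqueness of equilibrium reduce to existence and uniqueness of a fixed point of $T$.

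\textbf{Step 4 (contraction; the main obstacle).} The key point is that $\varepsilon_m$ appears on both sides, through $v_{m+1}$, so the naive Lipschitz bound adds two partial derivatives and could exceed one.

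The partial derivatives of $\psi$ are $\psi_\varepsilon=\varepsilon/v=:x\in[0,1]$ and $\psi_v=(1-x^2)/2$. The second enters multiplied by $\phi_{m+1}\le 1$. Since $x+(1-x^2)/2\le 1$ on $[0,1]$, the mean-value theorem gives
\[
|T(\varepsilon)_m-T(\varepsilon')_m|\le \frac{\beta\phi_{m+1}a_{m+1}}{a_m}\,\|\varepsilon-\varepsilon'\|_\infty .
\]
From Step 1, $a_m=1+\beta\phi_{m+1}a_{m+1}$, so each ratio $\beta\phi_{m+1}a_{m+1}/a_m$ is strictly less than one. Thus $\kappa=\max_m \beta\phi_{m+1}a_{m+1}/a_m<1$.

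It remains to fix a closed domain that $T$ maps into itself, so that Banach applies. I would take
\[
\mathcal{E}=\{\varepsilon:\ 0\le\varepsilon_m\le v_m(\varepsilon_{m-1})\ \text{for all } m\},
\]
the region where the uniform formulas are valid; this set is closed, so the restriction of the sup norm to it is complete. Invariance of $\mathcal{E}$ is the step I expect to need care. Nonnegativity of $T(\varepsilon)$ is immediate. For the upper bound, I would first try to show $T(\varepsilon)_m\le v_m$ directly from $\psi\le v\le 1$, which gives $T(\varepsilon)_m\le (u+\beta\phi_{m+1}a_{m+1})/a_m$; this delivers the bound under a mild parametric restriction on $u$, namely that the outside flow payoff is not so large that movers reject every match. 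If that restriction proves too tight, the fallback is to replace the cutoff by $\min\{\varepsilon_m,v_m\}$ (a corner in which no trade occurs). This truncation is $1$-Lipschitz, so it preserves the contraction modulus $\kappa$.
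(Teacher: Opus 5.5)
Your Steps 1--3 are correct for cutoffs in $[0,v_m]$, and so is the Lipschitz estimate. Since $\beta\phi_{m+1}a_{m+1}=a_m-1$, it reads $|T(\varepsilon)_m-T(\varepsilon')_m|\le(1-1/a_m)\|\varepsilon-\varepsilon'\|_\infty$ on the convex polytope $\mathcal E$. This is a genuinely different route from the paper, which works with $(\mathbf X,\mathbf v)$ on a box and appeals to damping. Note that a Lipschitz bound $\kappa\ge1$ for $\mathcal T$ only yields $(1-\lambda)+\lambda\kappa\ge1$ for the damped map, and this bound tends to $1$, not $\beta$, as $\lambda\to0$. So your undamped modulus is the stronger ingredient. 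It already gives uniqueness among equilibria with $0\le\varepsilon_m\le v_m$, because two fixed points in $\mathcal E$ must coincide whether or not $\mathcal E$ is invariant. The gap is existence: neither of your two routes to a self-map works.

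First, the direct bound. It gives $T(\varepsilon)_m\le 1-(1-u)/a_m$, which is close to $1$, while the target $1-\phi_m+\phi_m T(\varepsilon)_{m-1}$ can be of order $1-\phi_m$. In fact $\mathcal E$ is not invariant. Take $\varepsilon_{m-1}=\varepsilon_m=0$, $\varepsilon_{m+1}=1-\phi_{m+1}$, and all other coordinates zero; this point lies in $\mathcal E$. Then $T(\varepsilon)_m=[u+(a_m-1)(1-\phi_{m+1})]/a_m$. Using $\phi_m a_m\le a_{m-1}$, one also gets $1-\phi_m+\phi_m T(\varepsilon)_{m-1}\le\tfrac32(1-\phi_m)+u/a_m$. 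So invariance fails for every $u\ge0$ once $(1-1/a_m)(1-\phi_{m+1})>\tfrac32(1-\phi_m)$. With the paper's $\beta$, the post-2021 SIPP rise from 5.6\% to 8.9\% between February and March already meets this condition.

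Second, the fallback. Truncating the first argument of $\psi$ at $v_{m+1}$ keeps the modulus, but a fixed point where the truncation binds is not an equilibrium. Suppose month $m+1$ has no trade ($\varepsilon_{m+1}\ge v_{m+1}$). Then \eqref{Xm} gives $G_{m+1}=0$, so your own recursion $D_m=u+\beta\phi_{m+1}(D_{m+1}+G_{m+1})$ yields $a_m\varepsilon_m=u+(a_m-1)\varepsilon_{m+1}$, not $u+(a_m-1)v_{m+1}$. Likewise \eqref{vm} gives $v_{m+2}=1-\phi_{m+2}+\phi_{m+2}\min\{\varepsilon_{m+1},v_{m+1}\}$, not a formula in $\varepsilon_{m+1}$ alone.

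Building exactly these corner conditions in repairs the proof. Replace $\psi$ by $\chi(\varepsilon,v)=\mathbb E\max\{\tilde\varepsilon,\varepsilon\}$ with $\tilde\varepsilon\sim U[0,v]$. This $\chi$ equals $\psi$ for $0\le\varepsilon\le v$, equals $\varepsilon$ for $\varepsilon\ge v$, and satisfies $|\chi(\varepsilon,v)-\chi(\varepsilon',v')|\le\max\{|\varepsilon-\varepsilon'|,|v-v'|\}$. Let $v(\varepsilon)$ be the unique periodic solution of $v_{m+1}=1-\phi_{m+1}+\phi_{m+1}\min\{\varepsilon_m,v_m\}$, which is $1$-Lipschitz in $\varepsilon$. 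Then $T(\varepsilon)_m=[u+(a_m-1)\chi(\varepsilon_{m+1},v_{m+1}(\varepsilon))]/a_m$ maps $\mathbb R^{12}_+$ into itself with modulus $\max_m(1-1/a_m)$. Its fixed points are exactly the equilibria, with no-trade months allowed; this is what the paper's clipped $\bar\varepsilon_m$ achieves in $(\mathbf X,\mathbf v)$ coordinates. Requiring trade in every month, which is needed for \eqref{Pm} to be defined, is then a separate parametric condition; without seasonality it is exactly $u<1$.

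Finally, your converse step should verify \eqref{epsilon}. With $X$ and $b$ built from a fixed point, $\Delta_m\equiv\beta X_{m+1}+u-b_m-a_m\varepsilon_m$ satisfies $\Delta_m=\beta\phi_{m+1}\Delta_{m+1}$, so it vanishes by periodicity.
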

The proof is in Appendix \ref{appendix:A}. With existence and uniqueness established, we
now turn to the quantitative exercise of feeding the empirical hazard rates derived from monthly move rates derived earlier in Section \ref{sec:moves} into the model and asking what it predicts for prices and volume.

\subsection{Solution Algorithm}
We numerically compute the equilibrium by iterating on the damped mapping
$\mathcal{T}_\lambda$ used in the proof (see Appendix \ref{appendix:A} for the
definition of $\mathcal{T}_\lambda$):
\[
\mathcal{T}_\lambda(\mathbf{Z})=(1-\lambda)\mathbf{Z}
+\lambda\,\mathcal{T}(\mathbf{Z}), \qquad 0<\lambda\le1.
\]
Since the equilibrium exists and is unique, the convergence of this
iteration ensures that we obtain the correct solution. The damping factor
$\lambda$ affects only the convergence speed: smaller values slow down
updates but guarantee stability, while larger values accelerate
convergence at the risk of oscillations. We set $\lambda = 0.01$, which
satisfies the theoretical bound $\lambda < \bar\lambda$ for all
calibrations considered. Starting from an initial
$(\mathbf{X}^{(0)},\mathbf{v}^{(0)})$, for $t=0,1,2,\ldots$ we
iterate:
\begin{align*}
&D^{(t)}_m=\sum_r w_{m,r}\,X^{(t)}_{m+r},\\
&\varepsilon^{(t+1)}_m=\frac{\beta X^{(t)}_{m+1}+u-D^{(t)}_m}{A_m},\\
&v^{(t+1)}_m=(1-\lambda)v^{(t)}_m
+\lambda\big[1-\phi_m+\phi_m\,\varepsilon^{(t+1)}_{m-1}\big],\\
&\rho_m=\max\{0,\,v^{(t+1)}_m-\varepsilon^{(t+1)}_m\},\\
&X^{(t+1)}_m=(1-\lambda)X^{(t)}_m+\lambda\Big[\beta X^{(t)}_{m+1}
+u+\tfrac{A_m}{2}\,\dfrac{\rho_m^2}
{\max\{v^{(t+1)}_m,\underline v\}}\Big].
\end{align*}
We iterate until the sup-norm distance between successive iterates falls
below $10^{-5}$, at which point we treat the solution as
converged.\footnote{As a validation step, we replicate the bi-annual
specification in NT by setting $m=2$ and using their calibrated parameter
values. Our algorithm delivers the same steady-state objects reported in
their Appendix: the probability of sale is $0.31$ in summer and $0.25$ in
winter, and steady-state stocks are $v_w=0.167$ and $v_s=0.180$.}

\section{Calibration and Results}
\label{sec:calibration}

\subsection{Parametrisation}

Before presenting the results, we describe the parameter choices. The key parameter that determines the shape of the price and volume
distributions, as well as their amplitude, is the vector of monthly hazards $\{1-\phi_m\}_{m=1}^{12}$. A higher hazard rate means more households enter the market as buyers
and sellers, resulting in a thicker market, which in turn increases both
transaction volume and prices. We therefore back out the hazard rates from the
monthly moving rates obtained from the data.

Let $s_m$ denote the
share of all annual moves that take place in month $m$, normalised so that
$\sum_{m=1}^{12} s_m = 1$. These shares are obtained directly from the SIPP or MoveHQ
data, or proxied by the Google Trends keyword series. We recover the monthly  $1-\phi_m$ by imposing that hazards are
proportional to $s_m$. Specifically, we write
\begin{equation}
    1 - \phi_m = \kappa \, s_m, \qquad m = 1, \ldots, 12,
    \label{eq:hazard_proportional}
\end{equation}
where $\kappa > 0$ is a scalar common to all months. The proportionality
assumption ensures that the seasonal shape of the hazard vector is determined
entirely by the observed distribution of moves across months, while $\kappa$
controls the overall level. We identify $\kappa$ by requiring that the implied
annual moving probability is consistent with the observed annual move rate, $\eta$:
\begin{equation}
    1 - \eta \;=\;
    \prod_{m=1}^{12} \bigl(1 - \kappa \, s_m\bigr).
    \label{eq:annual_survival_monthly}
\end{equation}
Since the left-hand side is decreasing in $\kappa$ and equals one at
$\kappa = 0$, equation~\eqref{eq:annual_survival_monthly} has a unique
solution, which we pin down by iteration. The resulting 
$\{1-\phi_m\}_{m=1}^{12}$ is then fed directly into the model as the
calibrated hazard vector.

Estimates of annual mobility rates, $\eta$, are
available from CPS Migration tables
\citep{census_mobility}. The pre-2021 average  is $\eta = 10.3\%$, while post-2021 it is
$\eta = 8.3\%$.\footnote{The fall in $\eta$ is consistent with the decline in residential mobility documented in the literature \citep{MolloySmithWozniak2011,JiaEtAl2023}.}

The monthly discount factor $\beta=\hat \beta (1-\delta)$ reflects two considerations. The first is standard time discounting: Following NT, we use an annual interest rate of 6\%, which translates to a monthly discount factor $\hat\beta \approx 99.5\%$. The second is the risk that an ongoing transaction opportunity is disrupted before completion, due to failed negotiations, financing problems, inspection outcomes, or the collapse of housing chains. We capture this in reduced form with parameter $\delta$, which is the monthly probability of such a disruption. Empirical evidence suggests that these risks are economically meaningful: in the UK, roughly 25--30\% of residential property transactions fell through before completion in 2024, while in the US, depending on the state and city, the cancellation rates can reach around 20\% \citep{quickmovenow2024fallthrough,redfin2024cancellations}. Setting $\delta = 0.025$ implies an annual disruption probability of about 26\%, broadly in line with these figures. The quantitative role of $\delta$ is discussed further when we explore the amplitude of price calibration below.

Housing services $u$ are proxied by the rent-to-price ratio, which we set to 3\% to account for taxes. Rather than fixing a price level, we solve for
$u$ endogenously by requiring that $u = 0.03 \times \bar{P} / 12$, where
$\bar{P}$ is the average equilibrium price.\footnote{Since prices are endogenous, this is implemented through iteration. Starting from an initial guess, we solve the equilibrium,
update $u$ from the resulting average price, and repeat until convergence.} Finally, the seller's bargaining weight is set to $\theta =0.5$, giving equal bargaining power to buyers and sellers.

\subsection{Results}
We calibrate the model using two alternative proxies for the seasonal distribution of moves: the SIPP monthly move shares and the Google Trends \textit{moving truck} search index. In the first case, the move shares come directly from the SIPP estimates of monthly moves (plotted in Figure \ref{fig:Percentageofhosueholdmoves-panel-a}). In the second case, the implied move shares are calculated as outlined in Section~\ref{subsection:google}. These shares pin down the hazard vectors via equation~\eqref{eq:annual_survival_monthly}. We then feed these hazard rates into the model and solve for the equilibrium separately for each period. Figure~\ref{fig:calibration} reports the results.

\begin{figure}[t!]
\centering
    \begin{subfigure}{0.45\textwidth}
        \includegraphics[width=\linewidth]{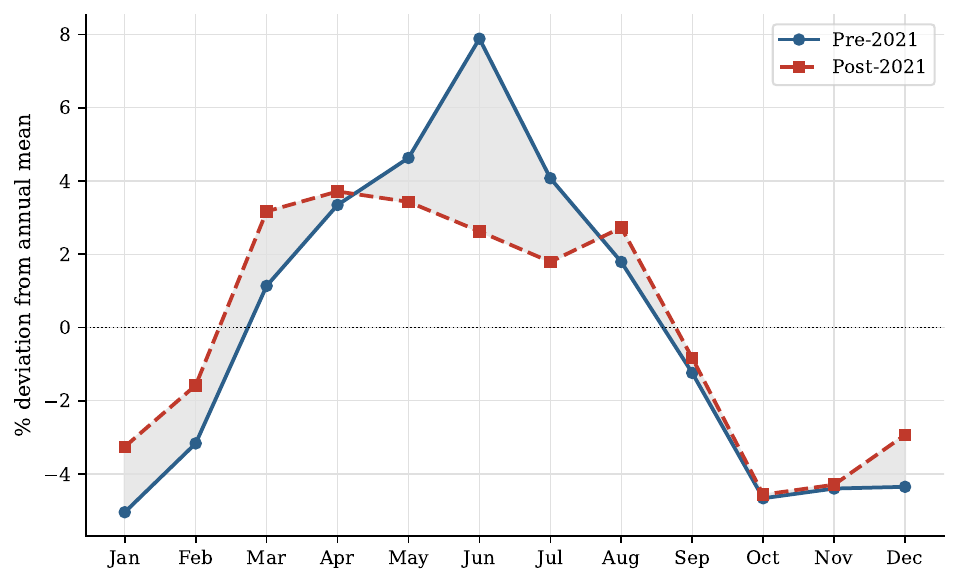}
        \caption{Prices (implied by SIPP)}
    \end{subfigure}\hfill
    \begin{subfigure}{0.45\textwidth}
        \includegraphics[width=\linewidth]{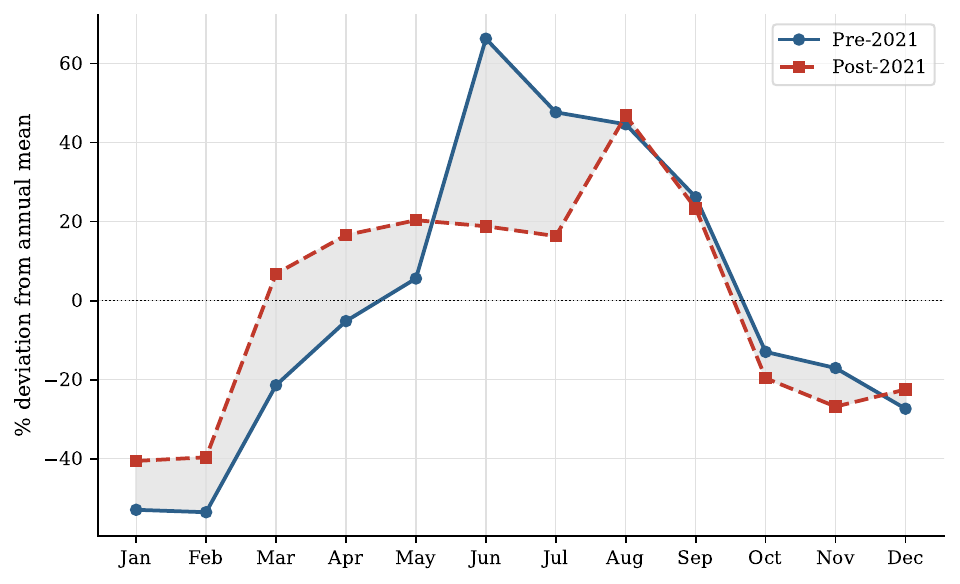}
        \caption{Sales (implied by SIPP)}
    \end{subfigure}

    \vspace{0.4cm}

    \begin{subfigure}{0.45\textwidth}
        \includegraphics[width=\linewidth]{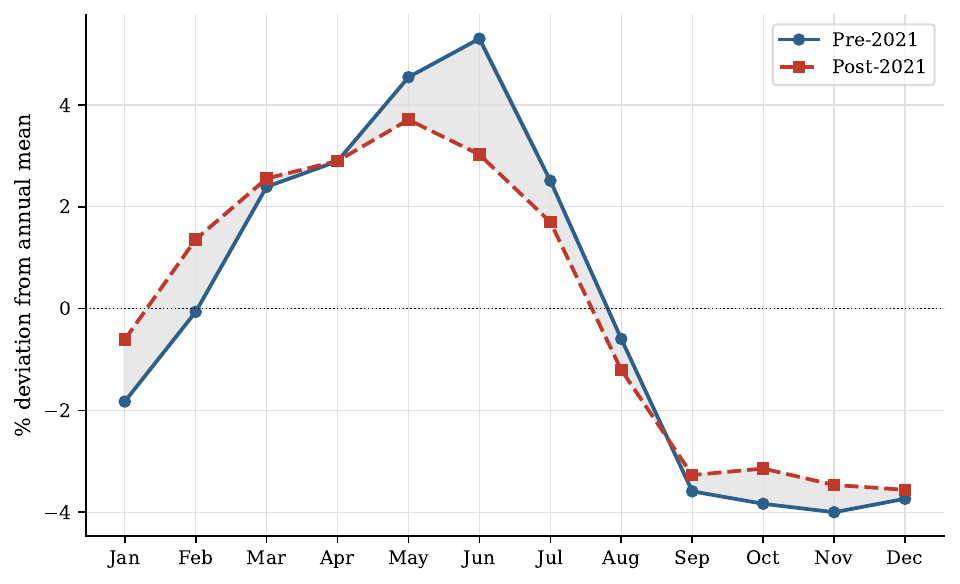}
        \caption{Prices (implied by Google Trends)}
    \end{subfigure}\hfill
    \begin{subfigure}{0.45\textwidth}
        \includegraphics[width=\linewidth]{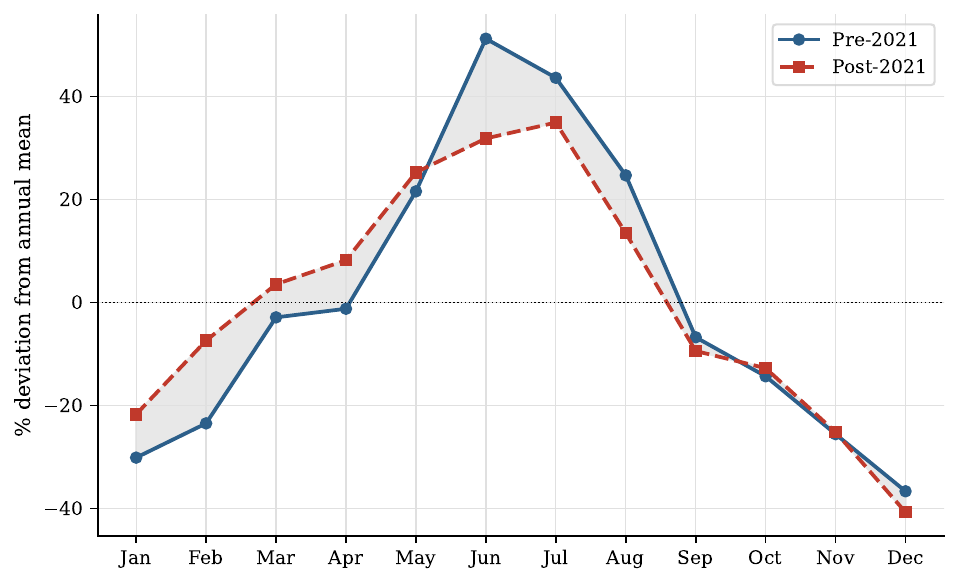}
        \caption{Sales (implied by Google Trends)}
    \end{subfigure}
    \caption{\small Model-implied seasonal deviations from the annual mean
    (\%), calibrated separately using SIPP monthly move shares (panels a--b)
    and the Google Trends \textit{moving truck} search index (panels c--d).
    Left panels show prices, right panels show transaction volume. Each
    series is computed as the percentage deviation of the monthly equilibrium
    value from its annual mean, averaged over the pre-2021 and post-2021 periods.}
    \label{fig:calibration}
\end{figure}

\medskip
\textit{The spring shift.---} The calibration delivers the central finding of the paper: a shift in the
seasonal distribution of household mobility translates into a corresponding
shift in the seasonal cycle of prices and transaction volumes. Both
calibrations reproduce the pre-2021 seasonal pattern, with prices and
volume peaking in summer and troughing in winter. Post-2021, both profiles
shift earlier: spring months strengthen and summer months soften in both
prices and volume, consistent with the Zillow evidence in
Figure~\ref{fig:Zillow-pre-post}.

When households move earlier in the year, the market thickens earlier,
more transactions close in spring, and the matches that form generate
larger surpluses and higher prices. A shift in mobility toward spring
therefore translates directly into a spring shift in both prices and
sales. The model requires no change in preferences, technology, or market
structure; the observed change in when households move is sufficient.

\medskip
\textit{Price amplitude.---} The magnitude of price seasonality in the calibration is broadly
consistent with the Zillow data (Figure \ref{fig:Zillow-pre-post}), where prices fluctuate by
roughly $\pm6\%$ around the annual mean. Our calibrations generate price oscillations of about
$\pm4$–$5\%$ (Figure \ref{fig:calibration}), placing the simulated amplitude in the same
ballpark as the data.

This match is not a free result. To see why, consider setting $\delta = 0$, so that
$\beta = \hat{\beta} \approx 99.5\%$ reflects only standard time discounting. The spring
shift remains (prices still move away from summer toward spring), but price deviations collapse
to roughly $\pm1\%$ around the annual mean, far too compressed to be consistent with the Zillow
data.\footnote{The spring 
shift is present at every value of $\delta$, including $\delta = 0$. 
The disruption adjustment is needed to match the \textit{level} of price  variation; it plays no role in generating the \textit{direction} of the 
seasonal shift.} The reason is that with such a high $\beta$, agents are very patient and place nearly
equal weight on current and future market conditions. This diminishes the variation in prices and
makes their seasonal trajectory nearly flat.

Matching the observed amplitude requires making agents discount the future more heavily,
and transaction disruption makes exactly this adjustment, albeit in reduced form. Deals that fall through
after months of negotiation impose real costs---legal fees, survey expenses, lost time---that
are not recovered when the household re-enters the market. In a fully structural model this
would enter the value function explicitly; here we capture it in reduced form through $\delta$,
which lowers $\beta$ below $\hat{\beta}$ as defined in Section~\ref{sec:calibration}.
Setting $\delta = 0.025$ is consistent with the empirical cancellation rates cited above,
and, conveniently, it produces price amplitudes in line with the Zillow data.
\medskip

\textit{Price timing.---}
Both calibrations reproduce the pre-2021 price peak in June. Post-2021,
however, the model-implied peak shifts earlier---to April under the SIPP
calibration and to May under the Google Trends calibration---while the
Zillow data continue to peak in June. This difference primarily reflects
the shape of the underlying mobility hazards rather than a failure of the
mechanism. In the model, prices are forward-looking: a mover's reservation
quality $\varepsilon_m$ depends on the continuation value $X_{m+1}$, which
incorporates expected market thickness in future months. When the
post-2021 hazard profile becomes relatively flat, as in the SIPP data, the
model anticipates sustained high market thickness and shifts the price
response toward the beginning of that plateau, producing an earlier peak.
By contrast, the Google Trends hazard retains a more pronounced hump after
2021, so the forward-looking adjustment is weaker and the peak occurs in
May, which is closer to the timing observed in the Zillow data.

\medskip 

\textit{Sales volume.---}
In the Zillow data, seasonal volume deviations range from roughly $-30\%$
to $+20\%$ around the annual mean, with spring months gaining and summer
months softening post-2021. Both calibrations reproduce this seasonal
pattern and the post-2021 shift. The SIPP calibration produces
fluctuations of $-55\%$ to $+66\%$, which is considerably larger than
observed in the data. The Google Trends calibration yields a narrower
range of $-41\%$ to $+35\%$, bringing the implied variation meaningfully
closer to the Zillow benchmark.

Further note that, unlike prices, volume tracks the hazard rate closely rather than leading
it. The listing stock is driven by the current flow of new entrants, which
tracks the hazard; the reservation cutoff varies little across months, so
volume timing is governed almost entirely by the hazard
vector.\footnote{Under the uniform distribution assumption, transaction
volume in month $m$ is $Q_m = v_m - \varepsilon_m$, where $v_m$ is the
vacancy stock and $\varepsilon_m$ is the reservation cutoff. The bargaining
parameter $\theta$ affects only how surplus is split between buyer and
seller, and drops out of $Q_m$ entirely. The discount factor $\beta$ and
housing services flow $u$ enter through $\varepsilon_m$, but the
reservation cutoff varies little across months, so the effect on volume amplitude and timing is negligible. The seasonal profile of volume
is therefore determined almost entirely by $v_m$, which is driven by the
current flow of new entrants and hence tracks the hazard vector very closely.}
Volume amplitude is therefore only as good as the input: the SIPP hazard
varies too much across months, producing swings roughly twice as large as
the data, while the smoother Google Trends hazard brings the amplitude
meaningfully closer to the Zillow benchmark.

Taken together, the evidence suggests that the Google Trends
\textit{moving truck} series provides the more informative calibration
input. For prices it produces a smoother hazard profile that avoids the
excessive front-loading generated by the flat SIPP plateau, and for sales
it delivers volume fluctuations closer to those observed in the Zillow
data. Although it is only a proxy for household moves, the Google Trends
measure yields the more consistent fit across both dimensions of the
calibration.

 \section{Conclusion}\label{sec:conclusion}

This paper starts out by documenting a change in the seasonal timing of activity in the
US housing market after 2021. For decades, prices and transaction volumes
followed a stable seasonal cycle, rising through the spring and peaking
in early summer. Using Zillow data, we show that this pattern shifted
after 2021, with both prices and sales rising earlier in the year and the
seasonal peak arriving sooner than in the pre-pandemic period.

To interpret this change, the paper draws on the search-and-matching
framework of NT, in which housing market seasonality
is driven by the timing of household mobility. In that framework, a shift
in the seasonal pattern of prices and transaction volumes should, all
else equal, reflect a similar shift in the timing of household moves. We
therefore ask whether mobility patterns have changed in a way consistent
with the earlier seasonal cycle observed in the housing market. Using
three independent sources (SIPP, Google Trends search activity related to moving,
and an industry report on relocation services) we find that the timing of
moves has indeed shifted earlier in the year. 

We then extend the NT framework from two seasons
to twelve calendar months and characterize the periodic equilibrium of
the housing market. When the empirical moving-hazard profiles are fed
into the model, the equilibrium seasonal cycle of prices and transaction
volumes shifts earlier in the year, consistent with the patterns observed
in the data. The results therefore suggest that the recent change in
housing market seasonality can be explained by a change in the timing of
household mobility, without requiring any change in preferences,
technology, or market structure.

What caused the shift in the timing of moves remains unclear. A plausible answer is the pandemic-era rise in remote work. By weakening the link
between residence and workplace, working from home may have allowed
households greater flexibility in choosing when to relocate, rather than
concentrating moves in the traditional summer window. This, however,
remains a conjecture. Most of the emerging literature on post-pandemic
mobility studies \textit{where} households relocated, while the question of \textit{why} their timing may have changed has received
little attention. By documenting the change in seasonal mobility and
embedding it in a tractable model of housing-market seasonality, this
paper aims to make the question more accessible to future research.

Whether the shift is permanent is equally uncertain. Remote work has begun
retreating from its post-2021 plateau as return-to-office mandates spread, and if that retreat continues, moves should drift back toward summer, and the traditional seasonal pattern should gradually reassert itself. Testing this prediction, as longer post-pandemic data become available, is a natural direction for future research.

\begingroup
\bibliographystyle{apalike}
\bibliography{References_Seasonality}
\endgroup

\strut

\noindent \textit{Declaration of AI-assisted technologies in the manuscript preparation process.---} During the preparation of this work the authors used ChatGPT (OpenAI) in order to assist with drafting and editing the manuscript prose. After using this tool, the authors reviewed and edited the content as needed and take full responsibility for the content of the published article.

\newpage

\appendix
\numberwithin{equation}{section}

\section{Existence and uniqueness of the equilibrium}
\label{appendix:A}
\noindent \textbf{Proof of Proposition \ref{exists and unique}.}

\noindent \textit{Preliminaries.---} Because the Bellman equation (\ref{Hm})
is linear in $H_{m+1}(\varepsilon)$ and depends on $\varepsilon$ only
through terms that are themselves linear, it follows that if
$H_{m+1}(\varepsilon)$ is affine in $\varepsilon$, then $H_m(\varepsilon)$
must also be affine. By backward induction over the twelve months,
$H_m(\varepsilon)$ is affine for all $m$. Define the slope and intercept
of $H_m$ as
\[
A_m \equiv \frac{\partial H_m(\varepsilon)}{\partial \varepsilon},
\qquad
D_m \equiv H_m(0).
\]
It follows that $H_m(\varepsilon) = A_m \varepsilon + D_m$. Matching
coefficients in (\ref{Hm}) gives two recursions:
\begin{equation}
A_m = 1 + \beta \phi_{m+1} A_{m+1} \quad \text{and} \quad
D_m = \beta \phi_{m+1} D_{m+1} + \beta (1-\phi_{m+1}) X_{m+1}.
\label{Am and Dm}
\end{equation}
Iterating on $A_m$ and imposing 12-periodicity yields
\begin{equation}
A_m = \frac{\displaystyle\sum_{s=0}^{11}\beta^{s}
\!\Big(\prod_{j=1}^{s}\phi_{m+j}\Big)}
{\displaystyle 1-\beta^{12}\Phi}, \quad \text{where} \quad
\Phi \equiv \prod_{j=1}^{12}\phi_j.
\end{equation}
Letting $\phi_{\max} \equiv \max_m \phi_m$, we have
\[
1 \le A_m \le \frac{1}{1-\beta\phi_{\max}}, \qquad
A_{\min} \equiv \min_m A_m \ge 1, \quad
A_{\max} \equiv \max_m A_m \le \frac{1}{1-\beta\phi_{\max}}.
\]
Similarly, iterating on $D_m$ yields
\[
(1-\beta^{12}\Phi)D_m
= \sum_{r=1}^{12}\beta^{r}\Big(\prod_{j=1}^{r-1}\phi_{m+j}\Big)
(1-\phi_{m+r})X_{m+r},
\]
so that
\begin{equation}
D_m = \sum_{r=1}^{12} w_{m,r}\,X_{m+r}, \qquad
w_{m,r} \equiv \frac{\beta^{r}\Big(\prod_{j=1}^{r-1}\phi_{m+j}\Big)
(1-\phi_{m+r})}{1-\beta^{12}\Phi}.
\label{Dm}
\end{equation}
Combining $H_m(\varepsilon) = A_m \varepsilon + D_m$ with (\ref{epsilon}),
we obtain the cutoff and surplus relationships
\begin{equation}
\varepsilon_m = \frac{\beta X_{m+1} + u - D_m}{A_m} \quad \text{and}
\quad S_m(\varepsilon) = A_m(\varepsilon - \varepsilon_m).
\label{em and Sm}
\end{equation}
Under the assumption that $\varepsilon \sim \mathrm{Unif}[0,v_m]$ in
month $m$, we have
\[
1-F_m(\varepsilon_m;v_m) = 1-\frac{\varepsilon_m}{v_m}, \qquad
\mathbb{E}[\varepsilon \mid \varepsilon \ge \varepsilon_m]
= \frac{\varepsilon_m + v_m}{2},
\]
and therefore
\[
\mathbb{E}[S_m(\varepsilon)\mid \varepsilon \ge \varepsilon_m]
= \frac{A_m}{2}\,(v_m-\varepsilon_m).
\]
Substituting these expressions into (\ref{Xm})--(\ref{Pm}) yields
\begin{align}
X_m &= \beta X_{m+1} + u
+ \frac{A_m}{2}\frac{(v_m-\varepsilon_m)^2}{v_m}, \label{Xm-unif}\\
v_m &= 1-\phi_m+\phi_m\,\varepsilon_{m-1}.\label{vm-unif}
\end{align}

\medskip

\noindent \textit{Defining the mapping $\mathcal{T}$.---} Pick coordinates
$(\mathbf{X},\mathbf{v})=\{(X_m,v_m)\}_{m=1}^{12}$ and choose bounds
\[
\underline v \equiv 1-\phi_{\max}>0, \qquad
\bar v \equiv \frac{1-\phi_{\min}}{1-\phi_{\max}}, \qquad
\underline X \equiv \frac{u}{1-\beta}, \qquad
\bar X \equiv \frac{u}{1-\beta}+\frac{A_{\max}}{2(1-\beta)}\,\bar v,
\]
and define the compact convex box $\mathcal{K} \equiv
[\underline X,\bar X]^{12}\times[\underline v,\bar v]^{12}$. Given any
$(\mathbf{X},\mathbf{v})\in\mathcal{K}$, define the mapping
$\mathcal{T}$ by the following steps:
\begin{enumerate}[label=(\roman*)]
\item Using equation (\ref{Dm}), compute $D_m$.
\item Via (\ref{em and Sm}), compute the provisional cutoff
$\varepsilon_m = (\beta X_{m+1}+u-D_m)/A_m$ and let
\[
\bar\varepsilon_m \equiv \min\{\max\{0,\varepsilon_m\},\,v_m\}.
\]
\item Update vacancies via (\ref{vm-unif}):
$\tilde v_m = 1-\phi_m+\phi_m\bar\varepsilon_{m-1}$.
\item Update movers' values via (\ref{Xm-unif}):
\[
\tilde X_m = \beta X_{m+1}+u+\frac{A_m}{2}
\frac{(v_m-\bar\varepsilon_m)^2}{v_m}.
\]
\end{enumerate}
By construction $0\le\bar\varepsilon_m\le v_m$, so
\[
\underline v \le \tilde v_m = 1-\phi_m+\phi_m\bar\varepsilon_{m-1}
\le \bar v, \qquad
\underline X \le \tilde X_m = \beta X_{m+1}+u+\frac{A_m}{2}
\frac{(v_m-\bar\varepsilon_m)^2}{v_m} \le \bar X.
\]
It follows that $(\tilde{\mathbf{X}},\tilde{\mathbf{v}})\in\mathcal{K}$.
Because all operations are algebraic and $v_m \ge \underline v > 0$ on
$\mathcal{K}$, the mapping $\mathcal{T}:(\mathbf{X},\mathbf{v})\mapsto
(\tilde{\mathbf{X}},\tilde{\mathbf{v}})$ is continuous on $\mathcal{K}$.

\medskip

\noindent \textit{Existence and Uniqueness.---} For any vector $\mathbf{Z}$,
define the sup norm $\|\mathbf{Z}\| \equiv \max_m |Z_m|$, and for pairs
let $\|(\mathbf{X},\mathbf{v})\| \equiv \max\{\|\mathbf{X}\|,
\|\mathbf{v}\|\}$. Pick a damping coefficient $\lambda \in (0,1]$ and
define the modified mapping
\[
\mathcal{T}_\lambda(\mathbf{Z}) \equiv (1-\lambda)\,\mathbf{Z} +
\lambda\,\mathcal{T}(\mathbf{Z}).
\]

\begin{lemma}
$\mathcal{T}$ and $\mathcal{T}_\lambda$ have identical fixed points.
\end{lemma}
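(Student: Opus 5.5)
The lemma follows directly from the definition of $\mathcal{T}_\lambda$ as a convex combination of the identity and $\mathcal{T}$. I will show the two inclusions of fixed-point sets separately, using only $\lambda\in(0,1]$ and the fact that both maps are defined on the same domain $\mathcal{K}$. Since $\mathcal{K}$ is convex and $\mathcal{T}(\mathcal{K})\subseteq\mathcal{K}$ (established above), $\mathcal{T}_\lambda(\mathbf{Z})=(1-\lambda)\mathbf{Z}+\lambda\mathcal{T}(\mathbf{Z})$ also maps $\mathcal{K}$ into itself. The two maps are therefore comparable on a common domain, and the question of fixed points is posed in the same space for both.

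First inclusion: suppose $\mathbf{Z}^*\in\mathcal{K}$ satisfies $\mathcal{T}(\mathbf{Z}^*)=\mathbf{Z}^*$. Substituting gives $\mathcal{T}_\lambda(\mathbf{Z}^*)=(1-\lambda)\mathbf{Z}^*+\lambda\mathbf{Z}^*=\mathbf{Z}^*$, so $\mathbf{Z}^*$ is a fixed point of $\mathcal{T}_\lambda$.

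Second inclusion: suppose $\mathcal{T}_\lambda(\mathbf{Z}^*)=\mathbf{Z}^*$. Rearranging the definition gives
\[
\lambda\big(\mathcal{T}(\mathbf{Z}^*)-\mathbf{Z}^*\big)=\mathcal{T}_\lambda(\mathbf{Z}^*)-\mathbf{Z}^*=\mathbf{0}.
\]
Because $\lambda>0$, we may divide through to obtain $\mathcal{T}(\mathbf{Z}^*)=\mathbf{Z}^*$. This step is the only place where the restriction $\lambda\in(0,1]$ matters: at $\lambda=0$ the damped map is the identity, and every point would be a fixed point.

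There is no real obstacle in the argument. It does not need continuity of $\mathcal{T}$, the explicit formulas (i)--(iv), or the particular norm. It is worth recording afterward why the lemma is useful: it lets the existence and uniqueness argument be carried out for $\mathcal{T}_\lambda$. That argument would use Brouwer on the compact convex set $\mathcal{K}$ for existence, and a contraction bound for suitably small $\lambda<\bar\lambda$ for uniqueness. By the lemma, the conclusions then transfer to fixed points of $\mathcal{T}$, which are exactly the equilibria of Proposition \ref{exists and unique}.
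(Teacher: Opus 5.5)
Your proof is correct and is essentially the paper's own argument: the forward inclusion by direct substitution, and the converse by rewriting $\mathcal{T}_\lambda(\mathbf{Z})=\mathbf{Z}$ as $\lambda\bigl(\mathcal{T}(\mathbf{Z})-\mathbf{Z}\bigr)=\mathbf{0}$ and dividing by $\lambda>0$. Your additional remarks on $\mathcal{T}_\lambda$ mapping $\mathcal{K}$ into itself and on how the lemma is used afterwards are harmless, but the lemma itself does not need them.
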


\begin{proof}
If $\mathbf{Z} = \mathcal{T}(\mathbf{Z})$, then
$\mathcal{T}_\lambda(\mathbf{Z}) = (1-\lambda)\mathbf{Z} +
\lambda\,\mathcal{T}(\mathbf{Z}) = (1-\lambda)\mathbf{Z} +
\lambda\mathbf{Z} = \mathbf{Z}$. Conversely, if
$\mathcal{T}_\lambda(\mathbf{Z}) = \mathbf{Z}$, then
$(1-\lambda)\mathbf{Z} + \lambda\,\mathcal{T}(\mathbf{Z}) = \mathbf{Z}$,
which gives $\lambda\bigl(\mathcal{T}(\mathbf{Z})-\mathbf{Z}\bigr) = 0$.
Since $\lambda > 0$, it follows that $\mathcal{T}(\mathbf{Z}) =
\mathbf{Z}$.
\end{proof}

Now, for any two points $\mathbf{Z} = (\mathbf{X},\mathbf{v})$ and
$\mathbf{Z}' = (\mathbf{X}',\mathbf{v}')$ in $\mathcal{K}$, it is
straightforward to show that
\[
\big\|\mathcal{T}(\mathbf{Z})-\mathcal{T}(\mathbf{Z}')\big\|
\le \kappa\,\|\mathbf{Z}-\mathbf{Z}'\|,
\]
where
\[
\kappa \equiv \beta + \frac{A_{\max}}{A_{\min}}\big(\beta+W^*\big),
\qquad W^* \equiv \max_m \sum_{r=1}^{12} w_{m,r}.
\]
It follows that
\[
\big\|\mathcal{T}_\lambda(\mathbf{Z})-\mathcal{T}_\lambda(\mathbf{Z}')
\big\| \le \kappa_\lambda\,\|\mathbf{Z}-\mathbf{Z}'\|,
\]
where
\[
\kappa_\lambda \equiv \beta + \lambda\,\frac{A_{\max}}{A_{\min}}
\big(\beta+W^*\big).
\]
The term $\kappa_\lambda$ is decreasing in $\lambda$ and satisfies
$\lim_{\lambda\to 0}\kappa_\lambda = \beta < 1$. Hence, choosing
\begin{equation}
\lambda < \bar\lambda = \frac{1-\beta}{\tfrac{A_{\max}}{A_{\min}}
(\beta+W^*)} \label{lambdabar}
\end{equation}
ensures $\kappa_\lambda < 1$, so that $\mathcal{T}_\lambda$ is a
contraction on $\mathcal{K}$. By Banach's fixed-point theorem,
$\mathcal{T}_\lambda$ has a unique fixed point. By the above lemma, this is also
the unique fixed point of $\mathcal{T}$, which means that the housing market equilibrium exists and it is unique. $\blacksquare$

\pagebreak

\section{Robustness Checks}
\label{appendix:B}
\subsection{Prices and Sales: Structural Break Test}
The choice of 2021 as the break date is not imposed. To verify it,
we scan all candidate break years from 2013 to 2023 and apply a Chow
$F$-test at each, testing whether the full 12-month seasonal profile
differs across the pre- and post-break subsamples.
Table~\ref{tab:break_tests} reports the results.

\begin{table}[htbp]
\centering
\caption{Structural break test across candidate break years}
\label{tab:break_tests}
\begin{tabular}{l cc}
\toprule
Break year & Prices & Sales \\
\midrule
2013 & 1.65$^{\dagger}$ & 0.49 \\
2014 & 0.77 & 0.26 \\
2015 & 0.66 & 0.27 \\
2016 & 0.67 & 0.42 \\
2017 & 0.81 & 0.50 \\
2018 & 0.98 & 0.45 \\
2019 & 0.87 & 0.43 \\
2020 & 0.98 & 0.64 \\
\midrule
\textbf{2021} & \textbf{1.91$^{*}$} & \textbf{1.79$^{\dagger}$} \\
2022 & 2.96$^{**}$ & 2.66$^{**}$ \\
2023 & 0.38 & 0.51 \\
\bottomrule
\end{tabular}
\begin{tablenotes}
    \item \small Each row treats the indicated year as the candidate structural break. The Chow $F$-test compares a restricted model (common 12-month seasonal profile across the full sample) against an unrestricted model (separate profiles pre and post break). The dependent variable is the within-year percentage deviation from the annual mean. Both series cover February 2008 to June 2025 ($N=209$). $^{**}p<0.01$, $^{*}p<0.05$, $^{\dagger}p<0.10$.
\end{tablenotes}
\end{table}

The $F$-statistics are small and insignificant at every candidate year
from 2014 to 2020 for both series. At 2021 they rise sharply, rejecting
stability in the seasonal profile of prices at the 5\% level
($F = 1.91$, $p = 0.035$) and of sales at the 10\% level
($F = 1.79$, $p = 0.052$). The 2022 row is also significant, but 2021 is the earliest year at which both series reject.

\subsection{Prices and Sales: X-13 Seasonal Decomposition}

The seasonal components used in Section~\ref{sec: document shifts}
are computed as percentage deviations from the annual mean,
$d_{m,T} = 100(y_{m,T} - \bar y_T)/\bar y_T$.  As a robustness check, we re-estimate the
seasonal components using X-13 ARIMA-SEATS
\citep{census_x13_2017}, the seasonal adjustment programme
developed by the US Census Bureau and used as the standard tool by
major statistical agencies worldwide.

\begin{figure}[htbp]
    \centering
    \includegraphics[width=\linewidth]{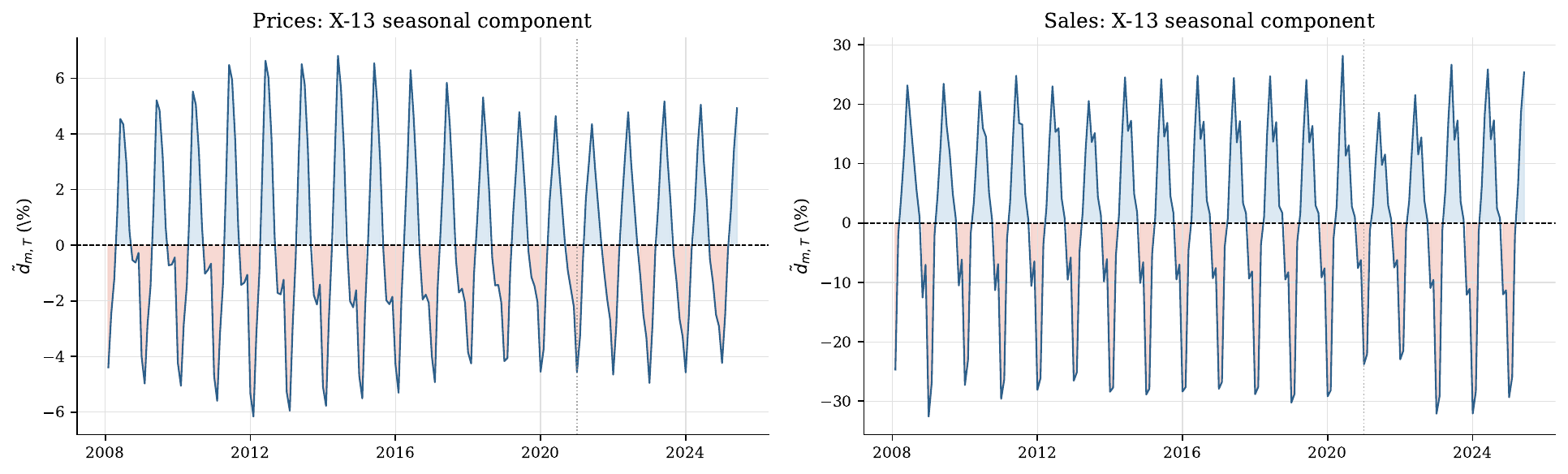}
    \caption{\small X-13 ARIMA-SEATS seasonal components for Zillow
    real prices (left) and sales (right). Each series plots
    $\tilde d_{t} = 100\,S_t/T_t$, the seasonal component as a
    percentage of the trend.}
    \label{fig:zillowX13}
\end{figure}

X-13 decomposes an observed series $y_t$ into trend, seasonal, and irregular components. We apply the X-13 procedure separately to the real price and sales series,
using the default specifications: additive components,  no trading-day adjustment and no
log transformation. Let $y_t^{sa}$ denote the seasonally adjusted series and
$T_t$ the trend component returned by the procedure. We then construct
\begin{equation*}
S_t = y_t - y_t^{sa}, \qquad \tilde d_t = 100\,\frac{S_t}{T_t},
\label{eq:x13dev}
\end{equation*}
so that the seasonal component is expressed as a percentage of the trend component.

Figure~\ref{fig:zillowX13} plots $\tilde d_t$ for prices and
sales. The regular seasonal spikes are clearly visible in both series and the two series are strikingly similar to their counterparts in the main text (Figure \ref{fig:seasonalcomponents}). Prices oscillate within a range of roughly $\pm 6\%$ around the
trend, and sales between approximately $-30\%$ and $+20\%$, almost
identical to the amplitudes in Figure \ref{fig:seasonalcomponents}.

We apply the same two tests as in Section~\ref{sec: document
shifts}  (the joint $F$-test and the H1--H2 directional contrast) to $\tilde d_t$ in place of $d_{m,T}$. The results are
reported in Table~\ref{tab:x13_tests}. The joint $F$-test rejects
the null of a stable seasonal profile for both series, with
considerably larger $F$-statistics than in
Table~\ref{tab:zillow_tests} ($F = 29.4$ for prices, $F = 6.4$
for sales). The H1--H2 directional contrast is also stronger
($t = 9.67$ for prices, $t = 3.22$ for sales), and the seasonal
$\Delta$ columns show the same pattern as before: spring
strengthens and summer and autumn weaken post-2021. The winter $\Delta$ is essentially zero for both series under this method,
consistent with the price result in Table~\ref{tab:zillow_tests}.

\begin{table}[htbp]
\centering
\caption{Robustness: X-13 ARIMA-SEATS seasonal decomposition}
\label{tab:x13_tests}
\small

\begin{tabular}{l cc cc cccc}
\toprule
 & \multicolumn{2}{c}{Joint $F$-test}
 & \multicolumn{2}{c}{Dir.\ contrast (H1--H2)}
 & \multicolumn{4}{c}{Seasonal $\Delta$ (\%)} \\
\cmidrule(lr){2-3}\cmidrule(lr){4-5}\cmidrule(lr){6-9}
Series & $F$ & $p$ & $t$ & $p_1$
       & Win.\ $\Delta$ & Spr.\ $\Delta$
       & Sum.\ $\Delta$ & Aut.\ $\Delta$ \\
\midrule
Prices & 29.37 & 0.000$^{**}$ & 9.67$^{**}$ & 0.000$^{**}$
       & $+0.2$ & $+1.5$ & $-1.3$ & $-0.3$ \\
Sales  & 6.39  & 0.000$^{**}$ & 3.22$^{**}$ & 0.002$^{**}$
       & $-0.2$ & $+2.1$ & $-1.0$ & $-0.8$ \\
\bottomrule
\end{tabular}%

\begin{tablenotes}
\small
\item The seasonal component $\tilde d_t = 100\,S_t/T_t$ is
estimated via X-13 ARIMA-SEATS, where $S_t$ is the
additive seasonal component and $T_t$ is the trend. The
joint $F$-test, H1--H2 directional contrast, and seasonal $\Delta$
columns are defined as in Table~\ref{tab:zillow_tests}. Both series
cover February 2008 to June 2025. $^{**}p<0.01$, $^{*}p<0.05$,
$^{\dagger}p<0.10$.
\end{tablenotes}
\end{table}

\subsection{Robustness Check for Table \ref{tab:move_shares}}
In the main text, the reference year 2020 is excluded from the pre-2021 period
because of the documented disruption to SIPP data collection during the acute
phase of the pandemic. This configuration corresponds to column~(A) in
Table~\ref{tab:move_shares_robustness}. To assess whether this exclusion
matters for the results, column~(C) adds 2020 to the pre-2021 period as a
robustness check.

The results are essentially unchanged. The spring gain and summer loss remain
statistically significant, the joint F-test continues to reject homogeneity of
the seasonal distribution, and March, June, and July retain both their sign
and statistical significance. February and April, while still positive, lose
individual significance, which appears consistent with 2020 introducing noise
into the pre-2021 period.

Column~(B) takes the opposite approach, dropping both COVID-adjacent years 2020 and 2021 from the analysis, and comparing 2017--2019 directly against 2022--2023. The pattern is identical to the baseline, and somewhat sharper: February and April recover significance, the spring difference rises to +4.5 percentage points, and the joint F-statistic increases to 7.28. The contrast strengthens when both transition years are removed from both sides of the comparison.

\begin{table}
\centering
\caption{Monthly move shares: robustness to sample period}
\label{tab:move_shares_robustness}
\begin{tabular}{lrrrrrr}
\toprule
 & \multicolumn{2}{c}{(A) Baseline} & \multicolumn{2}{c}{(B) 20 and 21 excluded} & \multicolumn{2}{c}{(C) All years} \\
\cmidrule(lr){2-3}\cmidrule(lr){4-5}\cmidrule(lr){6-7}
Month & {$\Delta$} & {$t$} & {$\Delta$} & {$t$} & {$\Delta$} & {$t$} \\
\midrule
January &  0.79 &  0.25 &  1.76 &  0.56 &  0.51 &  0.20 \\
February &  0.92$^{*}$ &  1.78 &  1.44$^{**}$ &  2.21 &  0.76 &  1.54 \\
March &  1.80$^{***}$ &  2.80 &  1.67$^{**}$ &  2.24 &  1.00$^{*}$ &  1.66 \\
April &  1.38$^{*}$ &  1.93 &  1.66$^{**}$ &  1.97 &  0.88 &  1.31 \\
May &  0.86 &  1.31 &  1.12 &  1.41 &  0.89 &  1.39 \\
June & -3.02$^{***}$ & -3.74 & -2.97$^{***}$ & -3.37 & -2.60$^{***}$ & -3.56 \\
July & -1.91$^{**}$ & -2.35 & -1.95$^{**}$ & -2.12 & -1.19$^{*}$ & -1.65 \\
August &  0.12 &  0.17 & -0.46 & -0.59 &  0.30 &  0.46 \\
September & -0.12 & -0.16 & -0.98 & -1.20 &  0.22 &  0.32 \\
October & -0.46 & -0.83 & -0.38 & -0.61 & -0.28 & -0.56 \\
November & -0.63 & -1.06 & -0.93 & -1.42 & -0.72 & -1.31 \\
December &  0.28 &  0.45 &  0.02 &  0.03 &  0.24 &  0.40 \\
\midrule
Spring (Mar--May) &  4.03$^{***}$ &  3.25 &  4.45$^{***}$ &  3.17 &  2.76$^{**}$ &  2.45 \\
Summer (Jun--Aug) & -4.82$^{***}$ & -3.21 & -5.37$^{***}$ & -3.30 & -3.49$^{***}$ & -2.76 \\
\midrule
Joint $F$ & \multicolumn{2}{c}{6.725, $p$=0.0000} & \multicolumn{2}{c}{7.275, $p$=0.0000} & \multicolumn{2}{c}{4.080, $p$=0.0000} \\
\bottomrule
\end{tabular}
\begin{tablenotes}
    \item \small (A) Baseline: Pre = 2017--19, Post = 2021--23; reference year 2020 excluded from the pre- period. (B) Pre = 2017--19, Post = 2022--23; years 2020 and 2021 excluded. (C) Pre = 2017--20, Post = 2021--23; all years inlcuded. $\Delta$ is change in the share of annual moves (pp, post minus pre). $t$-statistics are based on Balanced Repeated Replication (BRR) with Fay correction ($\rho=0.5$), using the 240 replicate weights provided by the Census Bureau. The joint $F$-statistic is the Rao--Scott design-adjusted test of whether the full monthly distribution shifted between periods. $^{*}$$p<0.10$,\ $^{**}$$p<0.05$,\ $^{***}$$p<0.01$.
\end{tablenotes}

\end{table}

\end{document}